\newtheorem{definition}{Definition}
\newtheorem{theorem}{Theorem}
\newtheorem{remark}{Remark}
\newtheorem{proposition}{Proposition}
\newtheorem{corollary}{Corollary}
\newtheorem{lemma}{Lemma}
\newcommand{\hilbertspace}{\mathcal{H}}
\newcommand{\operator}[1]{\mathsf{#1}}
\newcommand{\domain}{\mathcal{D}}
\newcommand{\canonicaldomain}{\domain}
\newcommand{\cd}{\canonicaldomain}
\newcommand{\timeinvariantset}{\mathscr{T}}
\newcommand{\tis}{\timeinvariantset}
\newcommand{\reals}{\mathbb{R}}
\title{Canonical pairs in finite-dimensional Hilbert space}
\author{Ralph Adrian E. Farrales* and Eric A. Galapon \\ \\
Theoretical Physics Group, National Institute of Physics \\ University of the Philippines Diliman, Philippines \\ \\
*Corresponding author: refarrales@up.edu.ph}
\date{\today}
\begin{document}

\selectlanguage{english}

\maketitle

\begin{abstract}
    A pair of Hermitian operators is canonical if they satisfy the canonical commutation relation. It has been believed that no such canonical pair exists in finite-dimensional Hilbert space. Here, we obtain canonical pairs by noting that the canonical commutation relation holds in a proper subspace of the Hilbert space. For a given Hilbert space, we study the many possible canonical pairs and look into the uncertainty relation they satisfy. We apply our results by constructing time operators in finite-dimensional quantum mechanics.
    
    \bigskip \noindent
    keywords: canonical pair, canonical commutation relation, finite-dimensional Hilbert space, time operator
\end{abstract}



\section{Introduction}

Traditionally, a canonical pair of observables $A$ and $B$ is represented in quantum mechanics by the pair of self-adjoint operators $\operator{A}$ and $\operator{B}$ on the Hilbert space $\hilbertspace$ satisfying the \textit{canonical commutation relation}
\begin{equation} \label{eq:ccr}
    [\operator{A},\operator{B}]=i\hbar\operator{I}_\hilbertspace \, ,
\end{equation}
where $[\operator{A},\operator{B}] = \operator{AB}-\operator{BA}$ is the commutator, $\hbar$ is the reduced Planck's constant, and $\operator{I}_\hilbertspace$ is the identity operator on $\hilbertspace$. However, it is known that \eqref{eq:ccr} does not hold in finite-dimensional Hilbert space \cite{born1925quantenmechanik,waerden1967sources,weyl1928gruppentheorie,weyl1950groups,putnam1967commutation}, since taking the trace of \eqref{eq:ccr} gives a zero on the left-hand side, and $i\hbar N$ on the right-hand side, where $N$ is the dimension of $\hilbertspace$.

Finding canonical pairs in finite-dimensional Hilbert space has been abandoned ever since. This is in contrast to the infinite-dimensional case, wherein under certain conditions, a canonical pair of self-adjoint operators exists, e.g., the unbounded and completely continuous position and momentum operators. Meanwhile, if one of the pair, say, $\operator{B}$, is bounded or semibounded, then there exists no self-adjoint $\operator{A}$ that can satisfy the canonical commutation relation \cite{pauli1933handbuch,pauli1958handbuch,pauli1980general}. Since a time observable and the Hamiltonian are expected to form a canonical pair, then the semiboundedness of the Hamiltonian caused the longstanding doubt on the existence of a time operator in quantum mechanics. This problem then extends to the finite-dimensional case---a quantum system with a bounded and discrete Hamiltonian offers no possibility for the existence of a time operator such that both operators satisfy \eqref{eq:ccr}.

This problem can be bypassed by noting that the canonical commutation relation holds only in a proper subspace of the Hilbert space, referred to as the canonical domain \cite{galapon2002pauli}. This then allows the possibility of having pairs of bounded Hermitian operators that are canonical. One such example is the existence of (characteristic) time operators that form a canonical pair with the semibounded discrete Hamiltonian (with some growth condition) in a dense subspace of the infinite-dimensional Hilbert space \cite{galapon2002self}. It should be noted that there are infinitely many possible canonical pairs in the same Hilbert space \cite{galapon2006could,farrales2022conjugates}. Different canonical pairs may correspond to different physical observables satisfying different properties. As an example, for a particle in a box, another possible operator forming a canonical pair with the Hamiltonian is the confined passage time, which is the amount of time that has passed as a particle moves from one point to another \cite{galapon2006could}. This is physically different from the characteristic time operator.

We recently showed in \cite{farrales2025characteristic} that the two-dimensional projection of the characteristic time operator acts as the \textit{hand} of a quantum clock, and is a canonical pair with the Hamiltonian in a one-dimensional canonical domain, only for a specific set of times of total measure zero, which we referred to as the time invariant set. This motivates us to revisit the problem of canonical pairs in finite-dimensional Hilbert space. We first give a review on canonical pairs in Section \ref{sec:canonical}. We shall define the canonical domain, the parameter invariant set, and note how different canonical pairs can be categorized as either equivalent or inequivalent. We demonstrate by first looking at the simplest cases: we construct canonical pairs given a two-dimensional Hilbert space in Section \ref{sec:2d}, as well as for the three-dimensional Hilbert space in Section \ref{sec:3d}. We show how equivalent and inequivalent pairs arise for each case, before moving to the general $N$-dimensional case in Section \ref{sec:nd}. We look into the uncertainty relation that the canonical pair satisfies in Section \ref{sec:uncertainty}, find the cases when this relation is saturated, and note how equivalent and inequivalent pairs affect the uncertainty relation. We then apply our results in Section \ref{sec:time} by constructing time operators that form canonical pairs with bounded discrete Hamiltonians. We show that measurement of these time operators give the parametric time when near the (time) parameter invariant set, and thus corresponds to a clock hand observable. We shall also look at how equivalent and inequivalent time operators affect the properties of the quantum clock. We then conclude with Section \ref{sec:conclusions}.

\section{Canonical pairs and commutators} \label{sec:canonical}
Consider the $N$-dimensional Hilbert space $\hilbertspace=\mathbb{C}^N$ for integer $N \geq 2$. Quantum states are elements of $\hilbertspace$ with unit norm. Operators are then $N \times N$ matrices acting on elements of $\hilbertspace$. We first define the commutator \cite{putnam1967commutation}.
\begin{definition}
    An operator $\operator{C}$ is called a commutator if there exists a pair of operators $(\operator{A},\operator{B})$ such that $\operator{C} = [\operator{A},\operator{B}] = \operator{AB} - \operator{BA}$ is nontrivial.
\end{definition}

\noindent
Since we do not include $\operator{C} = 0$ in the definition, then we shall also ignore pairs of operators $(\operator{A},\operator{B})$ where either of them is proportional to the identity.

\begin{proposition}
    Let $\operator{C} = [\operator{A},\operator{B}]$ be a commutator.
    \begin{enumerate}
        \item $\operator{C}$ is traceless.
        \item If the pair $(\operator{A},\operator{B})$ is Hermitian, then $\operator{C}=[\operator{A},\operator{B}]$ is anti-Hermitian.
    \end{enumerate}
\end{proposition}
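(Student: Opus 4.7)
The plan is to handle the two parts separately, each being a one-line manipulation relying on a standard algebraic property of the trace and the adjoint respectively.

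For part (1), I would invoke the cyclic property of the trace, namely $\mathrm{tr}(\operator{AB}) = \mathrm{tr}(\operator{BA})$, which is valid for any two $N \times N$ matrices. Applied directly to the definition of the commutator, this gives
\begin{equation*}
    \mathrm{tr}(\operator{C}) = \mathrm{tr}(\operator{AB} - \operator{BA}) = \mathrm{tr}(\operator{AB}) - \mathrm{tr}(\operator{BA}) = 0,
\end{equation*}
so $\operator{C}$ is traceless.

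For part (2), I would compute the adjoint of $\operator{C}$ using the identity $(\operator{XY})^* = \adjoint{\operator{Y}}\adjoint{\operator{X}}$ together with the Hermiticity assumptions $\adjoint{\operator{A}} = \operator{A}$ and $\adjoint{\operator{B}} = \operator{B}$. This yields
\begin{equation*}
    \adjoint{\operator{C}} = (\operator{AB} - \operator{BA})^* = \adjoint{\operator{B}}\adjoint{\operator{A}} - \adjoint{\operator{A}}\adjoint{\operator{B}} = \operator{BA} - \operator{AB} = -\operator{C},
\end{equation*}
which is precisely the statement that $\operator{C}$ is anti-Hermitian.

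There is no serious obstacle here; both claims are immediate consequences of the defining algebraic properties of trace and adjoint, and no appeal to the finite dimensionality of $\hilbertspace$ beyond the well-definedness of the trace is required.
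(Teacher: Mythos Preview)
Your proof is correct; both parts are the standard one-line arguments. The paper itself states this proposition without proof, treating it as an elementary fact, so there is nothing to compare against beyond noting that your argument is exactly the canonical one any reader would supply.
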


\noindent
The commutator cannot be proportional to the identity in finite-dimensional Hilbert space \cite{born1925quantenmechanik,waerden1967sources,weyl1928gruppentheorie,weyl1950groups,putnam1967commutation}.

\begin{theorem} \label{thm:nogo}
    Let $\operator{I}_\hilbertspace$ be the identity on $\hilbertspace$. If $\operator{C}$ is a commutator, then it cannot be proportional to $\operator{I}_\hilbertspace$. Conversely, if an operator $\operator{D}$ is proportional to $\operator{I}_\hilbertspace$, then it is never a commutator.
\end{theorem}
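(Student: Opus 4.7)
The plan is to prove both directions by computing the trace of a putative commutator and comparing it to the trace of a scalar multiple of the identity. This is essentially the classical Wintner--Wielandt argument specialized to the finite-dimensional setting, and Proposition~1 has already done most of the work.

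First I would prove the forward direction. Suppose $\operator{C}$ is a commutator, so $\operator{C} = [\operator{A},\operator{B}]$ for some pair of operators on $\hilbertspace = \mathbb{C}^N$. By Proposition~1 (item~1), $\operator{C}$ is traceless, since in finite dimensions $\mathrm{tr}(\operator{AB}) = \mathrm{tr}(\operator{BA})$ always holds. Now assume for contradiction that $\operator{C} = c \operator{I}_\hilbertspace$ for some scalar $c \in \mathbb{C}$. Taking the trace yields $0 = \mathrm{tr}(\operator{C}) = cN$, and since $N \geq 2$ we must have $c = 0$, so $\operator{C} = 0$. But the definition of a commutator explicitly excludes the trivial case $\operator{C} = 0$, giving a contradiction.

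For the converse, I would simply contrapose the above. Let $\operator{D} = d \operator{I}_\hilbertspace$ for some scalar $d$. If $d \neq 0$, then $\mathrm{tr}(\operator{D}) = dN \neq 0$, whereas any commutator has zero trace, so $\operator{D}$ cannot be written as $[\operator{A},\operator{B}]$. If $d = 0$, then $\operator{D} = 0$, which is excluded from being a commutator by definition. In either case $\operator{D}$ is not a commutator, and the two statements of the theorem are really two presentations of the same trace obstruction.

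There is essentially no obstacle here: the entire argument rests on the cyclicity of the trace in finite dimensions, which is why the result fails in infinite dimensions (where unbounded operators like $\operator{X}$ and $\operator{P}$ can satisfy $[\operator{X},\operator{P}] = i\hbar \operator{I}$). The only minor care needed is to cleanly handle the $d = 0$ subcase in the converse, since "proportional to $\operator{I}_\hilbertspace$" could in principle be read to include the zero operator; addressing this explicitly keeps the statement airtight and motivates the subsequent sections, where canonicity is instead demanded only on a proper subspace.
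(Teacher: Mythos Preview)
Your proof is correct and follows essentially the same trace-obstruction argument as the paper: both compare $\tr(\operator{C})=0$ with $\tr(d\operator{I}_\hilbertspace)=dN$ to derive a contradiction. Your treatment is slightly more explicit in separating the two directions and handling the $d=0$ subcase, whereas the paper dispatches both at once by taking $d$ nonzero from the outset.
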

\begin{proof}
    Let $\operator{C}$ be a commutator and let $\operator{D}$ be proportional to $\operator{I}_\hilbertspace$, i.e., $\operator{D} = d\operator{I}_\hilbertspace$ for nonzero constant $d$. Suppose for some $d$, $\operator{C} = \operator{D}$. However, since $\hilbertspace=\mathbb{C}^N$ and $N\neq0$, then $\tr(\operator{C}) = 0 \neq \tr(\operator{D}) = dN$, a contradiction.
\end{proof}

\noindent 
Thus, given a commutator $\operator{C}$ and nonzero constant $c$, the relation $\operator{C}\ket{\varphi} = c\ket{\varphi}$ cannot hold for all $\ket{\varphi}\in\hilbertspace$. The canonical commutation relation then only holds in a proper subspace of the finite-dimensional Hilbert space, referred to as the \textit{canonical domain} \cite{galapon2002pauli}.

\begin{definition}
    An ordered pair of Hermitian (self-adjoint) operators $(\operator{A},\operator{B})$ on $\hilbertspace$ is referred to as a canonical pair if it satisfies the canonical commutation relation
    \begin{equation} \label{eq:ccrd}
        [\operator{A},\operator{B}] \ket{\varphi} = i\hbar \ket{\varphi} ,
    \end{equation}
    for all $\ket{\varphi}$ in some (nontrivial and nonempty) proper subspace $\cd$ of $\hilbertspace$ referred to as the canonical domain. The triple $\mathcal{C}(\operator{A},\operator{B},\cd)$ is referred to as a solution to the canonical commutation relation if $(\operator{A},\operator{B})$ is a canonical pair in $\cd$. 
\end{definition}

\noindent 
In \cite{galapon2006could}, an infinite-dimensional Hilbert space was considered, and thus, the canonical domain can be dense or closed. A solution to the (infinite-dimensional) canonical commutation relation $\mathcal{C}(\operator{A},\operator{B},\cd)$ was referred to as a dense-category solution if $\cd$ is dense; otherwise, it is referred to as a closed-category solution. In the finite-dimensional case, the canonical domain is always closed, and thus, the solutions are always of closed-category.

We can define the more general case.

\begin{definition}
    An ordered pair of operators $(\operator{A},\operator{B})$ on $\hilbertspace$ is referred to as an essentially canonical pair (with respect to $c$) if it satisfies the essentially canonical commutation relation
    \begin{equation} \label{eq:eccrd}
        [\operator{A},\operator{B}] \ket{\varphi} = c \ket{\varphi} ,
    \end{equation}
    where $c$ is complex and nonzero, for all $\ket{\varphi}$ in some (nontrivial and nonempty) proper subspace $\cd$ of $\hilbertspace$ referred to as the (essentially) canonical domain. $\mathcal{C}^c(\operator{A},\operator{B},\cd)$ is referred to as a solution to the essentially canonical commutation relation if $(\operator{A},\operator{B})$ is an essentially canonical pair in $\cd$ with respect to $c$.
\end{definition}


\noindent
Every canonical pair is thus an essentially canonical pair. Meanwhile, every Hermitian essentially canonical pair can be mapped to a canonical pair. A Hermitian pair gives an anti-Hermitian commutator, and thus the commutator eigenvalues are purely imaginary. Suppose we have the Hermitian essentially canonical pair $(\operator{A},\operator{B})$ satisfying $[\operator{A},\operator{B}] \ket{\varphi} = ic \ket{\varphi}$ for $\ket{\varphi}\in\cd$ and real $c$. There are infinitely many ways to generate canonical pairs from this: we have the canonical pair $(\hbar^{\lambda} \operator{A} / c^{\rho} , \hbar^{1-\lambda} \operator{B} / c^{1-\rho} )$ in the same canonical domain $\cd$, where $\lambda$ and $\rho$ are constants, giving us the canonical solution $\mathcal{C}(\hbar^{\lambda} \operator{A} / c^{\rho} , \hbar^{1-\lambda} \operator{B} / c^{1-\rho},\cd)$. For example, if we have the essentially canonical relation $[\operator{A},\operator{B}]\ket{\varphi}=-i\hbar\ket{\varphi}$ for $\ket{\varphi}\in\cd$, one canonical solution that we can generate from this is $\mathcal{C}(-\operator{A},\operator{B},\cd)$.

\begin{remark}
    Consider the commutator $\operator{C}=[\operator{A},\operator{B}]$ satisfying the essentially canonical relation $\operator{C}\ket{\varphi} = c\ket{\varphi}$ for $\ket{\varphi}$ in its canonical domain $\cd$. Then, $\ket{\varphi}$ is an eigenket of $\operator{C}$ corresponding to the eigenvalue $c$, where the canonical domain is the eigenspace corresponding to the eigenvalue $c$.
\end{remark}

\begin{theorem} \label{thm:degen}
    A commutator on $\hilbertspace$ has degeneracy of at most $N-1$.
\end{theorem}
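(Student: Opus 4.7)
The plan is to argue by contradiction, directly invoking Theorem \ref{thm:nogo} and the nontriviality requirement built into the definition of a commutator. Suppose, toward a contradiction, that some eigenvalue $c$ of the commutator $\operator{C}=[\operator{A},\operator{B}]$ has degeneracy $N$. Then the corresponding eigenspace coincides with all of $\hilbertspace=\mathbb{C}^N$, so that $\operator{C}\ket{\varphi}=c\ket{\varphi}$ for every $\ket{\varphi}\in\hilbertspace$. Since an operator is determined by its action on a basis (or simply on every vector), this means $\operator{C}=c\operator{I}_\hilbertspace$.

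Next I would split on whether $c$ is zero or nonzero and eliminate both cases. If $c\neq 0$, then $\operator{C}$ is a nonzero multiple of the identity, directly contradicting Theorem \ref{thm:nogo}. If $c=0$, then $\operator{C}$ is the zero operator, contradicting the requirement that $\operator{C}$ be nontrivial in the definition of a commutator. Either way we reach a contradiction, so no eigenvalue of $\operator{C}$ can have degeneracy $N$, and hence the degeneracy is bounded above by $N-1$.

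There is no real obstacle here: the argument is essentially a restatement of the no-go theorem in eigenvalue language, using the observation that full degeneracy of an eigenvalue forces the operator to be a scalar multiple of the identity. The only thing worth being careful about is to treat the $c=0$ case explicitly, since Theorem \ref{thm:nogo} as stated concerns \emph{nonzero} scalar multiples of $\operator{I}_\hilbertspace$; this is handled cleanly by appealing to the nontriviality convention in the definition of a commutator.
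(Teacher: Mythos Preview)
Your proposal is correct and follows essentially the same route as the paper. The paper argues that degeneracy $N$ together with tracelessness forces the repeated eigenvalue to be zero, hence $\operator{C}=0$, contradicting nontriviality; you instead first observe that full degeneracy gives $\operator{C}=c\operator{I}_\hilbertspace$ and then split on $c$, invoking Theorem~\ref{thm:nogo} for $c\neq 0$ and nontriviality for $c=0$. Since Theorem~\ref{thm:nogo} is itself just the trace argument, the two proofs are the same in substance, differing only in whether the trace step is cited or redone inline.
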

\begin{proof}
    For $\hilbertspace=\mathbb{C}^N$, the commutator is an $N\times N$ matrix, and its characteristic polynomial has $N$ roots according to the fundamental theorem of algebra, and these roots correspond to the eigenvalues of the commutator. The multiplicity of each eigenvalue is its degeneracy. If it has an eigenvalue $c$ with degeneracy equal to $N$, then $c$ has to be zero for the commutator to be traceless, which corresponds to the trivial matrix. 
\end{proof}

\begin{proposition}
    A commutator on $\hilbertspace$ has at least two nonzero eigenvalues.
\end{proposition}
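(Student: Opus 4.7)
The plan is to argue by contradiction, combining the tracelessness of the commutator (Proposition~1) with the nontriviality clause in its definition. Writing $\operator{C}=[\operator{A},\operator{B}]$ and listing its eigenvalues $\lambda_{1},\ldots,\lambda_{N}$ with algebraic multiplicity, I would split the negated conclusion ``at most one nonzero eigenvalue'' into two subcases: (i) exactly one eigenvalue $\lambda$ is nonzero, with multiplicity $k\geq 1$; or (ii) every eigenvalue vanishes.

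Case (i) dispatches immediately: tracelessness gives $\sum_{i}\lambda_{i} = k\lambda = \tr(\operator{C}) = 0$, forcing $\lambda=0$, a direct contradiction. Nothing more than trace zero is required here, so this half of the argument is purely arithmetic.

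Case (ii) says $\operator{C}$ is nilpotent. To close it I would invoke the setting natural to this paper: the commutators of interest arise from Hermitian pairs, so by Proposition~1 they are anti-Hermitian, hence normal and unitarily diagonalizable. A diagonalizable matrix with vanishing spectrum is the zero matrix, contradicting the nontriviality clause in the definition of a commutator. Combining (i) and (ii), $\operator{C}$ must have at least two nonzero eigenvalues.

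The main obstacle is case (ii): tracelessness alone does not exclude a nontrivial nilpotent commutator, as any $2\times 2$ Jordan block (realizable as $[\operator{A},\operator{B}]$ by standard constructions) illustrates, so some structural input beyond trace zero is genuinely needed. Once the anti-Hermiticity, or at least the normality, of $\operator{C}$ is invoked to collapse the nilpotent case, the proposition follows at once from the two-case split above.
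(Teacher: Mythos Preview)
Your argument follows essentially the same route as the paper's: both derive a contradiction from tracelessness (your case~(i) is the paper's ``zero has degeneracy $N-1$'' step, generalized from multiplicity one to multiplicity $k\ge 1$) and from nontriviality (your case~(ii) is the paper's ``zero has degeneracy $N$'' step). You are actually more careful than the paper on the nilpotent case: the paper simply asserts that a zero eigenvalue of full multiplicity yields the trivial matrix ``by definition,'' which tacitly uses diagonalizability, whereas you make this explicit by invoking anti-Hermiticity and normality. Your observation that the proposition as literally stated is false for arbitrary (non-Hermitian) pairs---the $2\times 2$ Jordan block being a nontrivial traceless commutator with only zero eigenvalues---is correct and worth flagging; both the paper's proof and yours require the Hermitian hypothesis that the surrounding text supplies.
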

\begin{proof}
    The zero eigenvalue (if it exists) of a commutator cannot have degeneracy of $N$ by definition. In fact, it also cannot have degeneracy of $N-1$. Suppose the commutator has two distinct eigenvalues, the zero eigenvalue and some nonzero $c$. If the zero eigenvalue has degeneracy $N-1$, then $c$ has multiplicity equal to one, but since $c$ is nonzero, then that contradicts with the fact that the commutator is traceless. Thus, the zero eigenvalue has degeneracy of at most $N-2$, and there exists at least two nonzero eigenvalues.
\end{proof}

\noindent
That is, the commutator with the smallest number of nonzero eigenvalues is of the form 
\begin{equation} \label{eq:commex}
    \operator{C} = \mqty(\dmat{0, \ddots, 0, c, -c }) \, ,
\end{equation}
for nonzero constant $c$. This means that the commutator of two operators always satisfies at least two different essentially canonical commutation relations.

\begin{proposition} \label{prop:eccr}
    Every noncommuting pair of operators on $\hilbertspace$ is an essentially canonical pair, satisfying at least two different essentially canonical commutation relations.
\end{proposition}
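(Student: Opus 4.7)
My plan is to derive the conclusion directly from the preceding structural results on commutators. Given a noncommuting pair $(\operator{A},\operator{B})$, the commutator $\operator{C}=[\operator{A},\operator{B}]$ is nonzero and hence qualifies as a commutator in the sense of Definition 1. The goal is to extract two distinct nonzero eigenvalues of $\operator{C}$, each giving rise to its own essentially canonical commutation relation on its eigenspace.

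First I would invoke the preceding proposition to conclude that $\operator{C}$ has at least two nonzero eigenvalues (with multiplicity). The slightly more delicate point is to upgrade this to the existence of two \emph{distinct} nonzero eigenvalues, which is what is needed to obtain two \emph{different} essentially canonical relations. For this I would use tracelessness: if every nonzero eigenvalue of $\operator{C}$ were equal to a single value $c$, with total multiplicity $k\geq 1$, then $\tr(\operator{C})=kc$, and since $k\neq 0$ we would need $c=0$, contradicting the assumption. Hence $\operator{C}$ admits two distinct nonzero eigenvalues $c_1\neq c_2$.

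Next, for each $i\in\{1,2\}$ I would let $\cd_i=\ker(\operator{C}-c_i\operator{I}_\hilbertspace)$ denote the corresponding eigenspace. Each $\cd_i$ is nontrivial because $c_i$ is an eigenvalue, and it is a proper subspace of $\hilbertspace$ because by Theorem \ref{thm:nogo} the commutator $\operator{C}$ cannot be proportional to $\operator{I}_\hilbertspace$, so $\cd_i\neq\hilbertspace$. By construction, $[\operator{A},\operator{B}]\ket{\varphi}=c_i\ket{\varphi}$ for every $\ket{\varphi}\in\cd_i$, so $(\operator{A},\operator{B})$ is an essentially canonical pair with respect to $c_i$ on the canonical domain $\cd_i$, yielding the solution $\mathcal{C}^{c_i}(\operator{A},\operator{B},\cd_i)$. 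Because $c_1\neq c_2$, the two essentially canonical commutation relations are genuinely different.

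The only real obstacle is the distinctness step: the preceding proposition as stated only guarantees two nonzero eigenvalues counted with multiplicity, and one has to rule out a single repeated nonzero eigenvalue. The tracelessness argument above handles this cleanly, and once distinctness is secured the rest is an immediate unpacking of definitions.
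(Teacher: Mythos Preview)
Your proof is correct and follows essentially the same route as the paper: both arguments pass from $\operator{C}=[\operator{A},\operator{B}]\neq 0$ to the existence of two distinct nonzero eigenvalues via tracelessness, and then read off the two essentially canonical relations on the corresponding eigenspaces. Your treatment is in fact slightly more careful than the paper's, since you explicitly justify \emph{distinctness} of the two nonzero eigenvalues (the paper asserts this directly, leaning on the preceding proposition whose statement speaks only of ``at least two nonzero eigenvalues'' without the word ``distinct'').
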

\begin{proof}
    For noncommuting $\operator{A}$ and $\operator{B}$, i.e., $\operator{C} = [\operator{A},\operator{B}] \neq 0$, then $\operator{C}$ is nontrivial, traceless, and contains at least two distinct nonzero eigenvalues $c_1$ and $c_2$. Each eigenvalue corresponds to eigenfunctions in their respective eigenspaces $\cd_1$ and $\cd_2$, wherein $\cd_1 \neq \cd_2$. We then have two essentially canonical commutation relations: $\operator{C}\ket{\varphi} = [\operator{A},\operator{B}]\ket{\varphi} = c_1\ket{\varphi}$ for $\ket{\varphi}\in\cd_1$ and $\operator{C}\ket{\varphi}=c_2\ket{\varphi}$ for $\ket{\varphi} \in \cd_2$.
\end{proof}

\noindent
This means that a single pair of noncommuting operators may behave in at least two different ways, depending on whether a particular state is in one canonical domain or the other. Suppose that there exists a pair $(\operator{A},\operator{B})$ whose commutator is of the form \eqref{eq:commex} (with respect to the eigenkets of $\operator{C}$). Then, the eigenspaces of the nonzero eigenvalues have dimension equal to one. With the commutator satisfying $\operator{C}\ket{\varphi_\pm} = \pm c \ket{\varphi_\pm}$ for $\ket{\varphi_\pm} \in \cd_\pm$, then the canonical domains $\cd_\pm$ are also one-dimensional. Firstly, we see that the same pair (corresponding to the same commutator) behave differently depending on whether the state is in $\cd_+$ or in $\cd_-$. Secondly, we see that the number of zero eigenvalues of the commutator affects the maximum dimension of the eigenspaces, and thus, of the canonical domains.

Next, we have the following theorem by \cite{galapon2002pauli}.

\begin{theorem} \label{thm:noeigenket}
    Consider a Hermitian essentially canonical pair $(\operator{A},\operator{B})$ with canonical domain $\cd$. The eigenkets of $\operator{A}$ and $\operator{B}$ do not belong to $\cd$.
\end{theorem}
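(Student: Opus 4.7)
The plan is to proceed by contradiction: assume some eigenket of $\operator{A}$ (respectively $\operator{B}$) lies in the canonical domain $\cd$ and derive a contradiction with the nonvanishing of the commutator eigenvalue $c$. Since $\operator{A}$ is Hermitian, its eigenvalues are real, which is exactly what will force the relevant matrix element to vanish.

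Concretely, suppose $\ket{a}\in\cd$ is a normalized eigenket of $\operator{A}$ with $\operator{A}\ket{a}=a\ket{a}$, so that $a\in\reals$. Applying the essentially canonical commutation relation gives
\begin{equation*}
    \operator{A}\operator{B}\ket{a} - \operator{B}\operator{A}\ket{a} = c\ket{a},
\end{equation*}
which can be rewritten as $(\operator{A} - a\operator{I}_\hilbertspace)\operator{B}\ket{a} = c\ket{a}$. I would then pair this on the left with $\bra{a}$. By Hermiticity of $\operator{A}$, $\bra{a}(\operator{A}-a\operator{I}_\hilbertspace) = 0$, so the left side is zero. The right side is $c\braket{a}{a} = c \neq 0$ by the definition of an essentially canonical pair, which is the desired contradiction. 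The argument for an eigenket $\ket{b}$ of $\operator{B}$ is symmetric: write $\operator{A}(\operator{B} - b\operator{I}_\hilbertspace)\ket{b} - (\operator{B}-b\operator{I}_\hilbertspace)\operator{A}\ket{b} = c\ket{b}$ and pair with $\bra{b}$; Hermiticity of $\operator{B}$ kills the left side while the right side remains $c\neq 0$.

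The proof is short and structurally has no genuine obstacle; the only point requiring a bit of care is to observe that the Hermiticity hypothesis is essential. Without it, the eigenvalue $a$ need not be real and $\bra{a}$ need not be a left eigenvector of $\operator{A}$ with eigenvalue $a$, so the cancellation that kills the left-hand side would fail. I would mention this briefly to emphasize why the theorem is stated for the Hermitian case even though the essentially canonical commutation relation itself allows non-Hermitian pairs.
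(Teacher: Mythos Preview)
Your proof is correct and follows essentially the same approach as the paper: both argue by contradiction, evaluating $\braket{\phi|[\operator{A},\operator{B}]|\phi}$ for an eigenket $\ket{\phi}$ assumed to lie in $\cd$ and observing that Hermiticity forces this matrix element to vanish while the commutation relation forces it to equal the nonzero $c$. Your version unpacks the algebra more explicitly (rewriting the commutator as $(\operator{A}-a\operator{I})\operator{B}$ acting on the eigenket before pairing), whereas the paper states directly that $\braket{\phi|[\operator{A},\operator{B}]\phi}=0$ and then notes the contradiction; the underlying mechanism is identical.
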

\begin{proof}
    Consider $[\operator{A},\operator{B}] = i c\operator{I}_{\cd}$ where $c$ is nonzero and real, and $\operator{I}_{\cd}$ is the identity in $\cd$, i.e., $\operator{I}_\cd \ket{\varphi} = \ket{\varphi}$ for all $\ket{\varphi} \in \cd$. Let $\operator{\operator{A}}\ket{\phi}=a\ket{\phi}$. Suppose that its eigenkets $\ket{\phi}$ are in $\cd$. Then the eigenspace of $\operator{A}$ is invariant under the commutator $[\operator{A},\operator{B}]$. Since $\operator{A}$ is Hermitian, we then have $\braket{\phi|[\operator{A},\operator{B}]\phi}=0$, i.e., $[\operator{A},\operator{B}]$ maps each eigenket of $\operator{A}$ to its orthogonal. But this contradicts with the fact that $(ic)^{-1}[\operator{A},\operator{B}]$ is the identity in $\cd$. The same can be said about the eigenkets of $\operator{B}$.
\end{proof}

We showed in \cite{farrales2025characteristic} that the time-evolution operator $\exp(-i\operator{H}t/\hbar)$ does not necessarily leave $\cd$ invariant. That is, $\ket{\varphi(t)} = \exp(-i\operator{H}t/\hbar)\ket{\varphi}$ is not necessarily in $\cd$ for all $t$, and are only in $\cd$ for a set of times referred to as the time invariant set. We then define the \textit{parameter invariant set} for a general unitary matrix $\operator{U}(t)$.

\begin{definition}
    Consider a solution $\mathcal{C}^c(\operator{A},\operator{B},\cd)$ to the essentially canonical commutation relation. Consider a one-parameter unitary matrix $\operator{U}(t)$. For all $\ket{\varphi}$ in $\cd$, the parameter invariant set $\tis \subseteq \reals$ is the set of all $t$ such that $\operator{U}(t)\ket{\varphi}$ is also in $\cd$. Equivalently, for $\operator{A}(t) = \operator{U}^\dagger(t)\operator{A}\operator{U}(t)$ and $\operator{B}(t) = \operator{U}^\dagger(t)\operator{B}\operator{U}(t)$, then the the parameter invariant set is the set of all $t$ such that $\mathcal{C}^c(\operator{A}(t),\operator{B}(t),\cd)$ is a solution to the essentially canonical commutation relation.
\end{definition}

\noindent 
For $\operator{U}(t = 0)\ket{\varphi} = \ket{\varphi}$, then $t = 0$ is always an element of $\tis$, and thus, the parameter invariant set is never empty. With the unitary time-evolution operator $\operator{U}(t) = \exp(-i\operator{H}t/\hbar)$ where $\operator{H}$ is the Hamiltonian and $t$ is the parametric time, then the time operator and the Hamiltonian form a canonical pair for the set of times $t$ given by the parameter invariant set (referred to as the \textit{time} invariant set in \cite{farrales2025characteristic}). Throughout the paper, the $\operator{H}$ in $\exp(-i\operator{H}t/\hbar)$ need not necessarily be the Hamiltonian of the system, though we can still think of $\exp(-i\operator{H}t/\hbar)$ as an evolution operator with $t$ as the evolution parameter.

\begin{definition}
    $\mathcal{C}^c(\operator{A},\operator{B},\cd;\operator{U}(t),\tis)$ is denoted as the class of solutions $\mathcal{C}^c(\operator{U}^\dagger(t)\operator{A}\operator{U}(t),\operator{U}^\dagger(t)\operator{B}\operator{U}(t),\cd)$ to the essentially canonical commutation relation for all $t \in \tis$.
\end{definition}

\begin{theorem} \label{thm:tis}
    Consider $\mathcal{C}^c(\operator{A},\operator{B},\cd;\exp(-i\operator{H}t/\hbar),\tis)$ as a class of solutions to the essentially canonical commutation relation, where $\operator{A}$, $\operator{B}$, and $\operator{H}$ are Hermitian and $t$ is real. 
    If the canonical domain $\cd$ is one-dimensional and if $\ket{\varphi} \in \cd$ is an eigenket of $\operator{H}$, then $\tis=\reals$. Otherwise, $\tis$ is a set of measure zero.
\end{theorem}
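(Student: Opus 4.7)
The plan is to split the statement into its two directions and rely on an analyticity argument in $t$ for the ``otherwise'' clause. For the first direction, assume $\cd = \mathrm{span}\{\ket{\varphi}\}$ with $\operator{H}\ket{\varphi} = E\ket{\varphi}$. Then for every real $t$,
\begin{equation*}
\exp(-i\operator{H}t/\hbar)\ket{\varphi} = e^{-iEt/\hbar}\ket{\varphi} \in \cd,
\end{equation*}
and by linearity $\exp(-i\operator{H}t/\hbar)\cd \subseteq \cd$, so $\tis = \reals$.

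For the ``otherwise'' clause, let $P$ be the orthogonal projector onto $\cd$. Then $t \in \tis$ is equivalent to the matrix equation $(\operator{I}_\hilbertspace - P)\exp(-i\operator{H}t/\hbar)P = 0$. Working in the eigenbasis $\{\ket{E_n}\}$ of $\operator{H}$, every entry of the left-hand side is a finite linear combination of exponentials $e^{-iE_n t/\hbar}$ and therefore an entire function of $t$. By the identity theorem, each such entry either vanishes identically on $\reals$ or has only isolated zeros; the simultaneous zero set $\tis$ is therefore either all of $\reals$ or of Lebesgue measure zero. The ``otherwise'' claim thus reduces to showing that the identically-vanishing case coincides with the first case of the theorem.

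Differentiating the vanishing identity at $t = 0$ gives $(\operator{I}_\hilbertspace - P)\operator{H}P = 0$, i.e., $\operator{H}\cd \subseteq \cd$, and the restriction $\operator{H}|_\cd$ is Hermitian so that $\cd$ admits a basis of eigenkets of $\operator{H}$; in the one-dimensional case this basis is simply $\{\ket{\varphi}\}$. The most delicate point I foresee is ruling out the borderline scenario in which $\dim\cd \ge 2$ and $\cd$ happens to sit inside a degenerate eigenspace of $\operator{H}$, which would make $\tis = \reals$ despite $\cd$ not being one-dimensional. I would handle this by exploiting the fact that $\cd$ is simultaneously the eigenspace of the commutator $[\operator{A},\operator{B}]$ with eigenvalue $c$, combining the two invariance conditions together with Theorem~\ref{thm:noeigenket} to force such a configuration to collapse onto a single eigenket of $\operator{H}$ inside $\cd$.
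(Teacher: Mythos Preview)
Your analyticity argument is genuinely different from the paper's and cleaner where it works. The paper expands $\ket{\varphi(t)}$ in the eigenbasis of $\operator{H}$, identifies explicitly the discrete set of times at which the phases realign, and for $\dim\cd\ge 2$ argues that $\cd$ must contain at least one vector that is not an eigenket of $\operator{H}$. Your route via the identity theorem for the entire matrix-valued function $t \mapsto (\operator{I}_\hilbertspace-P)e^{-i\operator{H}t/\hbar}P$ collapses the casework to the dichotomy ``$\tis=\reals$ or $\tis$ is discrete'' in one stroke, and your characterization $\tis=\reals \Longleftrightarrow \operator{H}\cd\subseteq\cd$ is exactly the right reduction.

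The genuine gap is the borderline case you yourself flag: $\dim\cd\ge 2$ with $\cd$ contained in a degenerate eigenspace of $\operator{H}$. Your proposed fix via Theorem~\ref{thm:noeigenket} cannot succeed, because the hypotheses impose no relation whatsoever between $\operator{H}$ and the pair $(\operator{A},\operator{B})$; Theorem~\ref{thm:noeigenket} excludes only eigenkets of $\operator{A}$ and $\operator{B}$ from $\cd$, not those of an arbitrary third Hermitian $\operator{H}$. Concretely, take the pair $(\operator{A},\operator{B})$ of \eqref{eq:3dcp} with its two-dimensional canonical domain $\cd$ of \eqref{eq:3dcd}, and set $\operator{H}=E\,P+E'(\operator{I}_\hilbertspace-P)$ with $E\neq E'$ and $P$ your orthogonal projector onto $\cd$. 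Then $e^{-i\operator{H}t/\hbar}$ acts on $\cd$ as the scalar $e^{-iEt/\hbar}$, so $\tis=\reals$ while $\dim\cd=2$, contradicting the ``otherwise'' clause as stated. The paper's own argument for $M>1$ shares this lacuna: its assertion that $2^{-1/2}(\ket{\psi_1}+\ket{\psi_2})$ ``cannot be an eigenket'' of $\operator{H}$ tacitly assumes $\ket{\psi_1}$ and $\ket{\psi_2}$ do not lie in the same eigenspace, which is precisely the scenario in question.
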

\begin{proof}
    Suppose $\cd$ is one-dimensional, $\ket{\varphi} \in \cd$, and $\operator{H}\ket{\varphi} = E\ket{\varphi}$. Then, $\ket{\varphi(t)}=\exp(-i\operator{H}t/\hbar)\ket{\varphi} = \exp(-iEt/\hbar)\ket{\varphi}$ is in $\cd$ for all $t$. Thus, $\tis = \mathbb{R}$.

    Suppose $\cd$ is one-dimensional, $\ket{\varphi} \in \cd$, and $\ket{\varphi}$ is not an eigenket of $\operator{H}$. Let $\operator{H}\ket{s} = E_s \ket{s}$. We can write $\ket{\varphi} = \sum_{s=1}^N \varphi_s \ket{s}$, where at least two constant $\varphi_s$ are nonzero. Let $\delta(\operator{H})$ be the set of all absolute differences of the eigenvalues of $\operator{H}$, i.e., the set of all $\abs{E_k - E_l}$ for all $k$ and $l$. Then, $\ket{\varphi(t)}=\exp(-i\operator{H}t/\hbar)\ket{\varphi} = \sum_{s=1}^N \varphi_s \exp(-iE_st/\hbar)\ket{s}$ is in $\cd$ only when $t = 2n\pi\hbar/\gcd(\delta(\operator{H}))$ for $n \in \mathbb{Z}$, where $\gcd$ is the greatest common divisor. Therefore, $\tis$ is a set of measure zero.

    Suppose elements of an $M$-dimensional canonical domain can be written as $\ket{\varphi}=\sum_{k=1}^M \alpha_k \ket{\psi_k}$ where $M > 1$, $\alpha_k$ is complex and nonzero, and $\ket{\psi_k}$ is not necessarily an eigenket of $\operator{H}$. Since $\operator{H}$ is Hermitian, then its eigenkets are orthogonal. We can always find two elements in the canonical domain that are not orthogonal: if for example $\ket{\psi_1}$ is an eigenket of $\operator{H}$, then there exists an element of the canonical domain, say, $2^{-1/2}(\ket{\psi_1}+\ket{\psi_2})$ that cannot be its eigenket. Thus, there is at least one $\ket{\varphi}$ that is not an eigenket of $\operator{H}$. Let $\operator{H}\ket{s} = E_s \ket{s}$. We can write $\ket{\varphi} = \sum_{s=1}^N \varphi_s \ket{s}$, and if $\ket{\varphi}$ is not an eigenket of $\operator{H}$, then at least two constant $\varphi_s$ are nonzero. We have at least one element of $\cd$, $\ket{\varphi(t)}=\exp(-i\operator{H}t/\hbar)\ket{\varphi} = \sum_{s=1}^N \varphi_s \exp(-iE_st/\hbar)\ket{s}$, that can only be in $\cd$ in a set of measure zero. Thus, for all $\ket{\varphi} \in \cd$, we see that $\ket{\varphi(t)}$ is guaranteed to be in $\cd$ only in a set of measure zero.
\end{proof}

\begin{remark}
    For the class of solutions $\mathcal{C}^c(\operator{A},\operator{B},\cd;\exp(-i\operator{H}t/\hbar),\tis)$, if $[\operator{A},\operator{H}]=0$ or $[\operator{B},\operator{H}]=0$, then the parameter invariant set $\tis$ has total measure zero. This is because $\operator{H}$ has a common set of eigenkets with either $\operator{A}$ or $\operator{B}$, but their eigenkets are not in $\cd$ (Theorem \ref{thm:noeigenket}).
\end{remark}

\begin{remark}
    The greatest common divisor, $\gcd$, of a set of real numbers, $\delta = \qty{\delta_1, \delta_2, \dotsc}$, is the largest real number $\delta_0$ such that we can write each element of $\delta$ as $\delta_k = m_k \delta_0$ for integer $m_k$, i.e., $\gcd(\delta)=\delta_0$. Suppose that the parameter invariant set is $\tis = \qty{t=n/\gcd(\delta),n\in \mathbb{Z}}$. If there exists two nonzero elements of $\delta$ that are incommensurate, i.e., $\delta_j/\delta_l$ is irrational, then there exists no $\gcd(\delta)$, and we take the parameter invariant set to be $\tis = \qty{t=0}$. That is, let $(\operator{A},\operator{B})$ be a canonical pair with canonical domain $\cd$, let $\tis = \qty{t=0}$ be the parameter invariant set of the unitary $\operator{U}(t)$, and let $\operator{A}(t)= \operator{U}^\dagger(t)\operator{A}\operator{U}(t)$ and $\operator{B}(t)= \operator{U}^\dagger(t)\operator{B}\operator{U}(t)$. Then, $(\operator{A}(t),\operator{B}(t))$ is a canonical pair in $\cd$ only when $t = 0$.
\end{remark}

The canonical commutation relation between the time operator and the Hamiltonian is an example of having the unitary time-evolution operator possessing a (time) parameter invariant set that is of total measure zero \cite{farrales2025characteristic}. For an example where the parameter invariant set is the entire real line, consider the Pauli matrices, $\sigma_x$, $\sigma_y$, and $\sigma_z$, which are Hermitian operators on the two-dimensional Hilbert space. We have the Hermitian essentially canonical pair $(\sigma_x,\sigma_y)$ satisfying the essentially canonical commutation relation $[\sigma_x,\sigma_y]\ket{\sigma_z,+}=2i\ket{\sigma_z,+}$, where $\ket{\sigma_z,+}$ is an element of the one-dimensional canonical domain $\cd$, and where $\sigma_z\ket{\sigma_z,\pm}=\pm\ket{\sigma_z,\pm}$. Consider the unitary operator $\operator{U}(t) = \exp(-i\sigma_z t)$. Then, we have the class of solutions $\mathcal{C}^{2i}(\sigma_x,\sigma_y,\cd; \exp(-i\sigma_z t), \tis)$ with $\tis = \mathbb{R}$ (Theorem \ref{thm:tis}).

Two different unitary matrices $\operator{U}_1(t)$ and $\operator{U}_2(t)$ give the two classes of solutions $\mathcal{C}^c_1(\operator{A},\operator{B},\cd;\operator{U}_1(t),\tis_1)$ and $\mathcal{C}^c_2(\operator{A},\operator{B},\cd;\operator{U}_2(t),\tis_2)$. These two classes are encompassed under the two-parameter unitary operator $\operator{U}(t_1,t_2)=\operator{U}_1(t_1)\operator{U}_2(t_2)$ for $t_1\in\tis_1$ and $t_2\in\tis_2$, and thus, one can unitarily transform the canonical pair $(\operator{A},\operator{B})$ to either class.

\begin{definition}
    Two solutions $\mathcal{C}^c_1(\operator{A}_1,\operator{B}_1,\cd_1)$ and $\mathcal{C}^c_2(\operator{A}_2,\operator{B}_2,\cd_2)$ to the same essentially canonical commutation relation are equivalent if there exists a unitary operator $\operator{U}$ such that $\operator{A}_2 = \operator{U}^\dagger\operator{A}_1\operator{U}$, $\operator{B}_2 = \operator{U}^\dagger\operator{B}_1\operator{U}$, and $\ket{\varphi_2} = \operator{U}^\dagger \ket{\varphi_1}$ for every $\ket{\varphi_1}\in\cd_1$ and $\ket{\varphi_2}\in\cd_2$. Otherwise, they are inequivalent solutions.
\end{definition}

\begin{remark}
    The class of solutions $\mathcal{C}(\operator{A},\operator{B},\cd;\operator{U}(t),\tis)$ are equivalent solutions to the canonical commutation relation.
\end{remark}

\noindent 
In the same Hilbert space, there exists infinitely many canonical pairs. Given the canonical pair $(\operator{A},\operator{B})$ with canonical domain $\cd$, we can obtain other canonical pairs by finding its unitary equivalents. Another way of obtaining other solutions is by considering an operator $\operator{F}$ satisfying $[\operator{F},\operator{B}]=0$ giving us the canonical pair $(\operator{A}+\operator{F},\operator{B})$ in the same $\cd$. In general, it is possible to find two operators $\operator{A}_1$ and $\operator{A}_2$ forming a canonical pair with the same $\operator{B}$ such that $\mathcal{C}_1(\operator{A}_1,\operator{B},\cd_1)$ and $\mathcal{C}_2(\operator{A}_2,\operator{B},\cd_2)$ are inequivalent solutions. One of us \cite{galapon2006could} gave such an example (in infinite-dimensional Hilbert space): given the Hamiltonian of a particle in a box, there exists a dense-category solution given by the characteristic time operator \cite{galapon2002self} and a closed-category solution given by the confined passage time operator. Both are physically different, the former being related to a quantum clock \cite{farrales2025characteristic}, the latter representing how much time has passed for a particle moving from one point to another.

\begin{remark}
    Inequivalent solutions to the canonical commutation relation correspond to canonical pairs that are physically different, satisfying different properties.
\end{remark}

\noindent
Finally, different essentially canonical pairs may also correspond to pairs that are physically different. For example, given the Hamiltonian $\operator{H}$, suppose we have the pair $(\operator{T}_\pm,\operator{H})$ which satisfies two different essentially canonical commutation relations $[\operator{T}_\pm,\operator{H}]\ket{\varphi_\pm}=\pm i\hbar \ket{\varphi_\pm}$ for $\ket{\varphi_\pm}\in\cd_\pm$. Then, the canonical pair $(\operator{T}_+,\operator{H})$ corresponds to an operator $\operator{T}_+$ that increases in step with time for $\ket{\varphi_+}\in\cd_+$, while the other pair corresponds to a $\operator{T}_-$ that would instead decrease in step with time for $\ket{\varphi_-}\in\cd_-$.

Therefore, for a given Hermitian operator $\operator{B}$, the objective for the next few sections is to find a Hermitian $\operator{A}$ such that $(\operator{A},\operator{B})$ form a canonical pair in some canonical domain. From there, we naturally obtain (Hermitian) essentially canonical pairs, as well as equivalent and inequivalent solutions. We first consider a nondegenerate operator $\operator{B}$ with eigenkets $\ket{s}$ with the corresponding real eigenvalues $B_s$, $s = 1, 2, \dotsc, N$. We let $-\infty<B_s<B_{s+1}<\infty$. We also assume that its eigenkets form a complete orthonormal basis, such that we can write the $N$-dimensional Hilbert space as $\hilbertspace = \qty{\ket{\varphi}=\sum_{s=1}^N \varphi_s \ket{s}, \varphi_s \in \mathbb{C} }$. For $\ket{\varphi}$ in $\hilbertspace$, the canonical commutation relation \eqref{eq:ccrd} gives
\begin{equation} \label{eq:Aklchareq}
    \sum_{k=1}^N \varphi_k B_{kl} A_{lk} = i\hbar \varphi_l \, ,
\end{equation}
where $B_{kl} = B_k - B_l$ and $A_{lk} = \braket{l|\operator{A}|k}$, with $B_{kl}=-B_{lk}$ and $A_{lk} = A^*_{kl}$ (since $\operator{A}$ is also assumed to be Hermitian). For a given $\operator{B}$, one can solve for the matrix elements of $\operator{A}$ to obtain the canonical pair $(\operator{A},\operator{B})$. One can then solve the eigenspace of the commutator $[\operator{A},\operator{B}]$ to determine the corresponding canonical domain $\cd$. Essentially canonical pairs can be obtained by looking at other eigenvalues of the commutator aside from $i\hbar$. Finally, one looks at how some $\operator{U}(t)$ affects $\cd$, and find the values of $t$ wherein $\operator{U}(t)\ket{\varphi}\in\cd$ for $\ket{\varphi}\in\cd$ to obtain the parameter invariant set $\tis$.

We shall also consider the case when $\operator{B}$ contains degeneracy, where we have $\ket{s,r}$ as the eigenket corresponding to the eigenvalue $B_s$ for $s = 1, 2, \dotsc, L$ and $r = 1, 2, \dotsc, M_s$ for positive integer $L$ and $M_s$, i.e., the eigenvalue $B_s$ has degeneracy $M_s$. We once again assume that the eigenkets form a complete orthonormal basis such that for $N = \sum_{s=1}^L M_s$, the $N$-dimensional Hilbert space can explicitly be written as $\hilbertspace=\qty{\ket{\varphi}=\sum_{s=1}^L \sum_{r=1}^{M_s} \varphi_{s,r} \ket{s,r}, \varphi_{s,r}\in\mathbb{C}}$. For $\ket{\varphi}$ in $\hilbertspace$, \eqref{eq:ccrd} gives
\begin{equation} \label{eq:Askdegenchareq}
    \sum_{s'=1}^L \sum_{r'=1}^{M_{s'}} \varphi_{s',r'} A_{s,r,s',r'} B_{s,s'} = i\hbar \varphi_{s,r} \, ,
\end{equation}
where $A_{s,r,s',r'} = \braket{s,r|\operator{A}|s',r'}$ and $B_{s,s'} = B_s - B_{s'}$. One sees that there exists no solution when $B$ is purely degenerate (when L = 1).

\begin{proposition}
    Consider a Hermitian $\operator{B}$ such that $\operator{B} \ket{s,r} = B_s \ket{s,r}$ for $s = 1, \dotsc, L$ and $r = 1, \dotsc, M_s$. If $L = 1$, then no Hermitian $\operator{A}$ can form a canonical pair with $\operator{B}$. 
\end{proposition}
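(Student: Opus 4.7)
The plan is to use the degeneracy hypothesis to show that $\operator{B}$ reduces to a multiple of the identity, at which point the commutator with any Hermitian $\operator{A}$ vanishes identically and no nonzero eigenvector of the form required by \eqref{eq:ccrd} can exist. Concretely, if $L=1$ then there is a single eigenvalue $B_1$ with full degeneracy $M_1=N$, and since the eigenkets $\ket{1,r}$ span $\hilbertspace$, one has $\operator{B}=B_1\operator{I}_\hilbertspace$. For any operator $\operator{A}$ this gives $[\operator{A},\operator{B}]=B_1[\operator{A},\operator{I}_\hilbertspace]=0$.

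From here I would invoke Theorem \ref{thm:nogo} (or argue directly): since the identically zero operator cannot satisfy $0\ket{\varphi}=i\hbar\ket{\varphi}$ for a nonzero $\ket{\varphi}$, the only solution of \eqref{eq:ccrd} is the trivial $\ket{\varphi}=0$. Hence the would-be canonical domain is $\{0\}$, which is excluded from the definition of a canonical pair, so no Hermitian $\operator{A}$ works.

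An alternative, equivalent route is to plug $L=1$ into the component form \eqref{eq:Askdegenchareq}: for $L=1$ every $B_{s,s'}=B_1-B_1=0$, so the left-hand side vanishes identically while the right-hand side is $i\hbar\varphi_{1,r}$, again forcing $\ket{\varphi}=0$.

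There is no real obstacle here; the only thing to be careful about is making clear that $L=1$ together with $N=\sum_{s=1}^L M_s$ forces $M_1=N$ and therefore $\operator{B}=B_1\operator{I}_\hilbertspace$, so that the conclusion follows from the no-go Theorem \ref{thm:nogo} applied to the trivial commutator.
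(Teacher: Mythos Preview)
Your proposal is correct, and your ``alternative route'' via \eqref{eq:Askdegenchareq} is exactly the paper's proof; your main route (observing $\operator{B}=B_1\operator{I}_\hilbertspace$ so $[\operator{A},\operator{B}]=0$) is an equivalent rephrasing. One small note: Theorem~\ref{thm:nogo} concerns nonzero commutators failing to equal a multiple of the identity, so it is not quite the right citation here---but your direct argument that $0\ket{\varphi}=i\hbar\ket{\varphi}$ forces $\ket{\varphi}=0$ is all that is needed.
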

\begin{proof}
    If $L = 1$, then the $s$ and $s'$ in \eqref{eq:Askdegenchareq} can only be equal to 1. Thus, $B_{s,s'} = B_s - B_{s'}$ is always zero, and so is the left-hand side of \eqref{eq:Askdegenchareq}. We then have $\varphi_{s,r} = 0$ for all $s$ and $r$, i.e., the canonical relation is only satisfied when $\ket{\varphi} = 0$, corresponding to a trivial subspace of $\hilbertspace$.
\end{proof}

\noindent
We thus have the condition that $M_s < N$ for all $s$, i.e., the degeneracies are always smaller than the dimension of the Hilbert space. In two-dimensional Hilbert space, there then does not exist a canonical pair $(\operator{A},\operator{B})$ if $\operator{B}$ is purely degenerate. Therefore, the simplest case allowing canonical pairs $(\operator{A},\operator{B})$ where $\operator{B}$ has degeneracy is in a three-dimensional Hilbert space with one eigenvalue of $\operator{B}$ being doubly-degenerate.

\section{Canonical pairs in two-dimensional Hilbert space} \label{sec:2d}
We demonstrate here for the 2D case. For $N=2$ and for a nondegenerate $\operator{B}$, \eqref{eq:Aklchareq} gives
\begin{equation}
    (i\hbar)^2 + B_{12}^2 \abs{A_{12}}^2 = 0 \, ,
\end{equation}
where $B_{12} = B_1 - B_2$ and $A_{12}=\braket{1|\operator{A}|2}$, giving the general solution $A_{12} = \hbar \exp(i\alpha)/\abs{B_{12}} = A_{21}^*$ for real $\alpha$, with $A_{11}=a_1$ and $A_{22}=a_2$ being any real number. With respect to the eigenkets of $\operator{B}$, $\ket{1} = \smqty(1 \\ 0)$ and $\ket{2} = \smqty(0 \\ 1)$, we then have the canonical pair in matrix form
\begin{equation} \label{eq:2dcp}
    \operator{A}_\alpha = \mqty( a_1 && -\frac{\hbar}{B_{12}}e^{i\alpha} \\ -\frac{\hbar}{B_{12}}e^{-i\alpha} && a_2 ) \, , \qquad \operator{B} = \mqty(B_1 && 0 \\ 0 && B_2) \, .
\end{equation}

Notice that $\operator{A}_\alpha = \exp(i\alpha \operator{B}/B_{12})\operator{A}_{\alpha=0}\exp(-i\alpha \operator{B}/B_{12})$, thus for fixed $a_1$ and $a_2$, the solutions $\operator{A}_\alpha$ for different $\alpha$ are just unitary transformations of each other. Notice also that we can rewrite $\operator{A}_\alpha$ into
$\smqty( 0 && -\frac{\hbar}{B_{12}}e^{i\alpha} \\ -\frac{\hbar}{B_{12}}e^{-i\alpha} && 0 ) + \smqty(a_1 && 0 \\ 0 && a_2)$, and see that the second term commutes with $\operator{B}$. Note that varying $a_1$ and $a_2$ could change the spectral properties of $\operator{A}_\alpha$ (such as its eigenvalues or its trace). Since unitary transformations preserve spectral properties, then the canonical pairs for various $a_1$ and $a_2$ are not necessarily unitary equivalents of each other. This then provides a possible way (which for the 2D case, is also the only way) of constructing inequivalent solutions to the 2-dimensional canonical commutation relation.

We then look at the commutator eigenvalue problem
\begin{equation} \label{eq:2dcomm}
    [\operator{A}_\alpha,\operator{B}]\ket{\varphi} = \hbar \mqty(0 && e^{i\alpha} \\ -e^{-i\alpha} && 0)\ket{\varphi} .
\end{equation}
We solve the eigenvalue problem \eqref{eq:2dcomm}, giving eigenvalues $\pm i\hbar$, corresponding to a canonical pair and an essentially canonical pair, respectively. We obtain a canonical pair by looking at the $+i\hbar$ case, which has the corresponding eigenvector $c \smqty(1 \\ i\exp(-i\alpha))$ for $c \in \mathbb{C}$. Normalization gives  $c = 2^{-1/2} \exp(i\beta)$ for real $\beta$ (though we will ignore the phase shift $\exp(i\beta)$ going forward). We then have the 1-dimensional canonical domain 
\begin{equation} \label{eq:2dcd}
    \cd_\alpha = \qty{ \frac{1}{\sqrt{2}} \mqty(1 \\ ie^{-i\alpha}) } \, ,
\end{equation}
giving us the solution $\mathcal{C}_\alpha(\operator{A}_\alpha,\operator{B},\cd_\alpha)$. The solutions for various $\alpha$ constitute equivalent solutions to the canonical commutation relation. Note that $a_1$ and $a_2$ do not affect the canonical domain, and thus for the 2D case, equivalent or inequivalent solutions arising from adding a term that commutes with $\operator{B}$ do not affect $\cd_\alpha$.

The unitary $\exp(-it\operator{B}/B_{12})$ does not leave the canonical domain invariant. For fixed $\alpha$, this evolution operator pushes elements of $\cd$ outside: since eigenkets of $\operator{B}$ are not in $\cd$ (Theorem \ref{thm:noeigenket}), then $\exp(-it\operator{B}/B_{12})$ moves $\ket{\varphi}\in\cd$ back to $\cd$ only when $t$ is in $\tis = \qty{2\pi n, n \in \mathbb{Z}}$, which is a set of measure zero (Theorem \ref{thm:tis}). We then have the class of solutions $\mathcal{C}_\alpha(\operator{A}_\alpha,\operator{B},\cd_\alpha;\exp(-it\operator{B}/B_{12}),\timeinvariantset)$ giving us the canonical pair and its unitary equivalents for real $\alpha$ and for $t \in \tis$. A different unitary operator generates other equivalent solutions. For the unitary $\operator{U}(t)=\exp(-it\operator{H}/\hbar)$, we have a parameter invariant set $\tis$ being equal to the entire real line if $\operator{H}$ does not commute with $\operator{A}$ or $\operator{B}$ and $\ket{\varphi}\in\cd$ is an eigenket of $\operator{H}$; otherwise $\tis$ has total measure zero.

From the commutator, we also naturally obtain an essentially canonical pair satisfying $[\operator{A}_\alpha,\operator{B}]\ket{\varphi}=-i\hbar\ket{\varphi}$ for $\ket{\varphi}\in\cd_\alpha^-=\qty{2^{-1/2}\smqty(1 \\ -ie^{-i\alpha})}$, giving us the essentially canonical solution $\mathcal{C}_\alpha^{-i\hbar}(\operator{A}_\alpha,\operator{B},\cd_\alpha^-)$. Thus, the same pair $(\operator{A}_\alpha,\operator{B})$ defines two different essentially canonical commutation relations, depending on whether it is in $\cd_\alpha$ or $\cd_\alpha^-$. The essentially canonical solution $\mathcal{C}_\alpha^{-i\hbar}(\operator{A}_\alpha,\operator{B},\cd_\alpha^-)$ can be mapped into a \textit{canonical} solution $\mathcal{C}_\alpha(-\operator{A}_\alpha,\operator{B},\cd_\alpha^-)$: with the unitary operator $\operator{U}(t) = \exp(-it\operator{B}/B_{12})$, we see that this is equivalent to our above canonical solution $\mathcal{C}_\alpha(\operator{A}_\alpha,\operator{B},\cd_\alpha)$  since $-\operator{A}_\alpha = \operator{U}^\dagger(t=\pi)\operator{A}_\alpha\operator{U}(t=\pi)$ and $\operator{U}^\dagger(t=\pi)$ maps $\cd_\alpha$ to $\cd_\alpha^-$ (up to some phase factor). Therefore, $\mathcal{C}_\alpha(-\operator{A}_\alpha,\operator{B},\cd_\alpha^-)$ and $\mathcal{C}_\alpha(\operator{A}_\alpha,\operator{B},\cd_\alpha)$ are equivalent solutions.

From Theorem \ref{thm:degen}, we have the following Corollary.

\begin{corollary}
    The commutator in two-dimensional Hilbert space is nondegenerate.
\end{corollary}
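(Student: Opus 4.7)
The plan is to obtain this as a direct specialization of Theorem \ref{thm:degen} to the case $N = 2$, with a brief sanity check that the bound is actually achieved nontrivially (i.e., that we really do have two distinct eigenvalues, not a single eigenvalue appearing with multiplicity less than $N$).

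First I would invoke Theorem \ref{thm:degen}: the degeneracy of any eigenvalue of a commutator on an $N$-dimensional Hilbert space is at most $N-1$. Setting $N=2$, every eigenvalue of the commutator has multiplicity at most $1$. Since the commutator is a $2\times 2$ matrix with exactly two eigenvalues counted with multiplicity, this forces the two eigenvalues to be distinct, which is precisely the statement of nondegeneracy.

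As a consistency check (and to make the statement sharper if one wants an explicit form), I would note that the earlier proposition guarantees at least two nonzero eigenvalues; combined with the tracelessness of any commutator, the spectrum in two dimensions must be of the form $\{+c,-c\}$ with $c\neq 0$. This matches the explicit computation already carried out in \eqref{eq:2dcomm}, where the commutator $[\operator{A}_\alpha,\operator{B}]$ was found to have eigenvalues $\pm i\hbar$.

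There is no real obstacle here; the corollary is essentially a one-line consequence of Theorem \ref{thm:degen}, and the only thing to be careful about is that "degeneracy of at most $N-1$" is interpreted as an upper bound on the multiplicity of each eigenvalue, so that when $N-1 = 1$ this translates into the standard meaning of \emph{nondegenerate}. No new computation is required beyond pointing to the prior theorem and the dimensional constraint.
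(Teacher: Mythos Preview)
Your proposal is correct and matches the paper's approach exactly: the paper simply states that the corollary follows from Theorem~\ref{thm:degen} with $N=2$, without further argument. Your added consistency check about the spectrum being $\{+c,-c\}$ with $c\neq 0$ is also precisely what the paper records in the Remark immediately following the corollary.
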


\begin{remark}
    In two-dimensional Hilbert space, commutators have eigenvalues $\pm c$ (since it is traceless) for real nonzero $c$. Thus, we always get two essentially canonical pairs (Proposition \ref{prop:eccr}), both with one-dimensional canonical domains. The pair $(\operator{A},\operator{B})$ being an essentially canonical pair with respect to $-c$ with canonical domain $\cd_-$ can be mapped to $(-\operator{A},\operator{B})$ being an essentially canonical pair with respect to $+c$ in the same $\cd_-$. We see above that for $c = i\hbar$ the two solutions are equivalent. This is also true for any constant $c$.
\end{remark}

For $\ket{s}$ being eigenkets of $\operator{B}$, recall that $\braket{s|[\operator{A},\operator{B}]s}=0$ since $\operator{B}$ is Hermitian. This means that the diagonal of the commutator (with respect to the eigenkets of $\operator{B}$) consists purely of zeros. If we can find a unitary operator transforming any (nontrivial) traceless matrix into a form whose diagonal consists purely of zeros, then we can always construct it as a commutator.

\begin{proposition} \label{prop:2dtracecomm}
    Every nontrivial $2 \times 2$ traceless matrix is a commutator.
\end{proposition}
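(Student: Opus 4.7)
The plan is to combine the identity $\operator{U}[\operator{A},\operator{B}]\operator{U}^\dagger=[\operator{U}\operator{A}\operator{U}^\dagger,\operator{U}\operator{B}\operator{U}^\dagger]$ with the explicit commutator construction from Section~\ref{sec:2d}. If one can find a unitary $\operator{U}$ such that $\operator{U}^\dagger M \operator{U}$ has zero diagonal, then that matrix is manifestly a commutator by the recipe behind \eqref{eq:2dcp}, and conjugating back makes $M$ itself a commutator. The task therefore reduces to the claim that every nontrivial $2\times 2$ traceless matrix is unitarily similar to one with zero diagonal.

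For this reduction, the key observation is that finding such a $\operator{U}$ is equivalent to finding a unit vector $\ket{v}$ with $\braket{v|M|v}=0$: extending $\ket{v}$ to an orthonormal basis $\{\ket{v},\ket{w}\}$ places $0$ at the $(1,1)$ slot, and tracelessness forces the $(2,2)$ slot $\braket{w|M|w}=\tr(M)-\braket{v|M|v}$ to vanish as well. The existence of such a $\ket{v}$ follows cleanly from the Hausdorff--Toeplitz theorem together with the elliptical range theorem: the numerical range $W(M)=\{\braket{v|M|v}:\braket{v|v}=1\}$ is a convex elliptical disk centered at $\tr(M)/2$, which is $0$ by hypothesis, so $0\in W(M)$. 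A self-contained alternative parametrizes $\ket{v}=(\cos\theta,e^{i\phi}\sin\theta)^T$ and writes, for $M$ with diagonal entries $a,-a$ and off-diagonal entries $b,c$,
\begin{equation*}
    \braket{v|M|v} = a\cos(2\theta) + \tfrac{1}{2}\sin(2\theta)\,\bigl(be^{i\phi}+ce^{-i\phi}\bigr),
\end{equation*}
and then chooses $\phi$ so that $(be^{i\phi}+ce^{-i\phi})/a$ is real---the resulting condition $e^{2i\phi}=(a\bar b-\bar a c)/(\bar a b-a\bar c)$ has unit-modulus right-hand side whenever it is well-defined---after which the problem collapses to an elementary real equation in $\theta$.

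Once $\operator{U}^\dagger M \operator{U}$ has zero diagonal and off-diagonal entries $p$ (upper) and $q$ (lower), the construction of Section~\ref{sec:2d}, with Hermiticity dropped, realizes it as a commutator: take $\operator{B}=\mathrm{diag}(B_1,B_2)$ with any $B_1\neq B_2$ and $\operator{A}$ off-diagonal with $A_{12}=p/(B_2-B_1)$ and $A_{21}=q/(B_1-B_2)$; a direct multiplication then yields $[\operator{A},\operator{B}]=\operator{U}^\dagger M \operator{U}$, so that $M=[\operator{U}\operator{A}\operator{U}^\dagger,\operator{U}\operator{B}\operator{U}^\dagger]$ is a commutator. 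The main obstacle is the middle step---the zero-diagonalization; Hausdorff--Toeplitz dispatches it in one line, while the direct route requires a brief case analysis for the degenerate subcases (diagonal of $M$ already zero in the given basis, or $M$ already diagonal so that $b=c=0$) to ensure that the chosen $\phi$ keeps $be^{i\phi}+ce^{-i\phi}$ nonzero whenever $a\neq 0$.
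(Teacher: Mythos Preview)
Your proof is correct and shares the paper's two-step skeleton---first unitarily conjugate $M$ to a zero-diagonal form, then read off $\operator{A}$ and $\operator{B}$ from the off-diagonal entries exactly as in Section~\ref{sec:2d}---but the zero-diagonalization step is carried out differently. The paper diagonalizes $\operator{C}$ via its eigenbasis $\{\ket{\varphi_1},\ket{\varphi_2}\}$ (writing $\operator{C}=c\op{\varphi_1}-c\op{\varphi_2}$) and then applies the Hadamard matrix $2^{-1/2}\smqty(1 & 1 \\ 1 & -1)$ to obtain $\smqty(0 & c \\ c & 0)$; you instead invoke the elliptical range theorem (or the explicit parametrization of $\ket{v}$) to locate a unit vector with $\braket{v|M|v}=0$ directly. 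Your route is slightly more general: the paper's argument tacitly assumes $\operator{C}$ has two linearly independent eigenvectors with nonzero eigenvalues $\pm c$, which fails for the nilpotent traceless matrix $\smqty(0 & 1 \\ 0 & 0)$, whereas the numerical-range argument covers that case without modification. The paper's route, on the other hand, is more explicit---it names a single concrete unitary rather than asserting existence---and feeds naturally into the $N$-dimensional Lemma~\ref{lem:unitary}, where the same ``diagonalize, then apply a Fourier-type unitary'' pattern is reused.
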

\begin{proof}
    Suppose $\operator{C}$ is traceless and $\operator{C}\neq 0$. Then it has eigenvalues $c$ and $-c$, for constant nonzero $c$. To the eigenvalue $\pm c$ is a corresponding eigenket $\ket{\varphi_1}$ and $\ket{\varphi_2}$, respectively. Explicitly, $\operator{C} = c \op{\varphi_1} - c \op{\varphi_2}$. There exists a unitary operator $\operator{U}$, e.g., $\operator{U} = 2^{-1/2} \smqty(1 && 1 \\ 1 && -1)$, such that in another basis $\ket{1} = \operator{U}\ket{\varphi_1}$ and $\ket{2} = \operator{U}\ket{\varphi_2}$, the diagonal of $\operator{C}$ purely consists of zeros, e.g., $\operator{U}^\dagger\operator{C}\operator{U} = \smqty(0 && c \\ c && 0)$.

    Next, we construct the operator $\operator{B} = B_1 \ket{1}\bra{1} + B_2 \ket{2}\bra{2}$ where $B_1$ and $B_2$ are constants. We then have the operator
    \begin{equation}
        \operator{A} = a_1 \ket{1}\bra{1} +\frac{\braket{1|\operator{C}|2}}{B_1 - B_2}\ket{1}\bra{2} + \frac{\braket{2|\operator{C}|1}}{B_1 - B_2} \ket{2}\bra{1} + a_2\ket{2}\bra{2} \, ,
    \end{equation}
    where $a_1$ and $a_2$ are constants, such that $\operator{U}^\dagger\operator{C}\operator{U} = [\operator{A},\operator{B}]$. Therefore, $\operator{C}=[\operator{U}\operator{A}\operator{U}^\dagger,\operator{U}\operator{B}\operator{U}^\dagger]$ is a commutator.
\end{proof}
Later, we will generalize this to $N$ dimensions and obtain the same conclusion as \cite{shoda1936einige} regarding traceless matrices. For now, we conclude this section by noting that any element of the two-dimensional Hilbert space corresponds to an essentially canonical pair whose one-dimensional canonical domain contains that element.

\begin{proposition} \label{prop:2devery}
    To every element $\ket{\varphi}$ of the two-dimensional Hilbert space $\hilbertspace$, there exists a commutator $\operator{C}$ such that $\operator{C}\ket{\varphi}=c\ket{\varphi}$, where $c$ is a nonzero constant.
\end{proposition}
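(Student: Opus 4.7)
The plan is to reduce this to Proposition \ref{prop:2dtracecomm}: given $\ket{\varphi}$, I will exhibit a nontrivial traceless operator $\operator{C}$ having $\ket{\varphi}$ as an eigenvector with a nonzero eigenvalue, after which Proposition \ref{prop:2dtracecomm} will immediately provide a pair $(\operator{A},\operator{B})$ with $[\operator{A},\operator{B}] = \operator{C}$, yielding the required commutator.

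The construction of $\operator{C}$ is the only substantive step. First I would dismiss the trivial case $\ket{\varphi}=0$ and, for $\ket{\varphi}\neq 0$, assume without loss of generality that $\ket{\varphi}$ is a unit vector, since being an eigenvector is invariant under nonzero rescaling. Using Gram--Schmidt, I would extend $\ket{\varphi}$ to an orthonormal basis $\{\ket{\varphi}, \ket{\varphi^\perp}\}$ of $\hilbertspace = \mathbb{C}^2$. For any nonzero constant $c$, I would then define
\begin{equation}
    \operator{C} = c\,\ket{\varphi}\bra{\varphi} - c\,\ket{\varphi^\perp}\bra{\varphi^\perp}.
\end{equation}
This operator is manifestly nontrivial and traceless, and by orthonormality satisfies $\operator{C}\ket{\varphi} = c\ket{\varphi}$ with $c \neq 0$, as required.

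Applying Proposition \ref{prop:2dtracecomm} to this $\operator{C}$ then produces operators $\operator{A}$ and $\operator{B}$ (concretely, the same ones written down in the proof of that proposition, with $B_1 \neq B_2$ chosen arbitrarily) such that $\operator{C} = [\operator{A},\operator{B}]$, completing the argument. The main obstacle is essentially absent: since Proposition \ref{prop:2dtracecomm} has already characterized $2\times 2$ commutators as precisely the nontrivial traceless matrices, this statement is nearly an immediate corollary. The only conceptual point is recognizing that one has enough freedom in choosing $\operator{C}$ to adapt its spectral decomposition to the arbitrarily prescribed $\ket{\varphi}$, which is exactly what the orthogonal-complement construction above supplies.
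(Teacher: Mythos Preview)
Your proposal is correct and follows essentially the same approach as the paper: construct $\operator{C} = c\,\op{\varphi} - c\,\op{\varphi^\perp}$ using an orthogonal complement of $\ket{\varphi}$, then invoke Proposition~\ref{prop:2dtracecomm} to conclude that the traceless $\operator{C}$ is a commutator. Your version is slightly more explicit about normalization and the Gram--Schmidt step, but the argument is the same.
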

\begin{proof}
    For a given $\ket{\varphi}$, consider $\operator{C} = c \op{\varphi} - c \op{\phi}$ wherein $\braket{\varphi|\phi}=0$ and $c \neq 0$. Thus, $\operator{C}\ket{\varphi} = c\ket{\varphi}$. From Proposition \ref{prop:2dtracecomm}, since $\operator{C}$ is traceless, then it is a commutator.
\end{proof}

\section{Canonical pairs in three-dimensional Hilbert space} \label{sec:3d}
For $N=3$ and nondegenerate $\operator{B}$, \eqref{eq:Aklchareq} gives
\begin{equation} \label{eq:Aklchareq3d}
    (i\hbar)^3 + B_{12}B_{13}B_{23}(A^*_{12}A_{13}A^*_{23}-A_{12}A^*_{13}A_{23})+i\hbar(B_{12}^2\abs{A_{12}}^2+B_{13}^2\abs{A_{13}}^2+B_{23}^2\abs{A_{23}}^2) = 0 \, .
\end{equation}
The ansatz of having the off-diagonal elements of $\operator{A}$ be $A_{kl}=\hbar\exp(i\alpha_{kl})/\abs{B_{kl}} = A_{lk}^*$ when substituted above gives the additional condition that $\alpha_{12}-\alpha_{13}+\alpha_{23}=3\pi/2+2n\pi$ for integer $n$. The commutator has eigenvalues $i\hbar$, $i\hbar$, and $-2i\hbar$. One possible canonical pair is when $\alpha_{12}=\alpha_{13}=\alpha_{23}=3\pi/2 + 2n\pi$,
\begin{equation} \label{eq:3dcp}
    \operator{A}=\mqty(a_1 && \frac{i\hbar}{B_{12}} && \frac{i\hbar}{B_{13}} \\ -\frac{i\hbar}{B_{12}} && a_2 && \frac{i\hbar}{B_{23}} \\ -\frac{i\hbar}{B_{13}} && -\frac{i\hbar}{B_{23}} && a_3) \, , \qquad \operator{B}=\mqty(B_1 && 0 && 0 \\ 0 && B_2 && 0 \\ 0 && 0 && B_3) \, ,
\end{equation}
with the two-dimensional canonical domain
\begin{equation} \label{eq:3dcd}
    \cd = \qty{\ket{\varphi} = \phi_1\mqty(1 \\ 0 \\ -1) + \phi_2\mqty(0 \\ 1 \\ -1), \braket{\varphi|\varphi}=1, \phi_1,\phi_2 \in \mathbb{C}} \, .
\end{equation}
With the unitary evolution $\operator{U}(t)=\exp(-it\operator{B})$, we obtain the equivalent solutions for $t$ in the parameter invariant set $\tis = \qty{t = 2n\pi/\gcd(B_{12},B_{23}), n \in \mathbb{Z}}$, where $\gcd$ is the greatest common divisor. This gives us the class of solutions $\mathcal{C}(\operator{A},\operator{B},\cd;\operator{U}(t),\tis)$. One can obtain other equivalent solutions by another choice of unitary operator, though the parameter invariant set is always a set of measure zero (Theorem \ref{thm:tis}), i.e., for any unitary $\operator{U}(t)$, then the pair $(\operator{U}^\dagger(t)\operator{A}\operator{U}(t),\operator{U}^\dagger(t)\operator{B}\operator{U}(t))$ are canonical for $t$ in a set of measure zero. Once again we can obtain inequivalent solutions by appropriately varying the real constants $a_s$ in $\operator{A}$.

With the commutator eigenvalue $-2i\hbar$, we get an essentially canonical pair $(\operator{A},\operator{B})$ in the one-dimensional canonical domain containing $3^{-1/2}\smqty(1 \\ 1 \\ 1)$. This pair can be mapped to the canonical pair $(-\operator{A}/2,\operator{B})$ in this same one-dimensional canonical domain, which is inequivalent to the above canonical pair $(\operator{A},\operator{B})$ since their canonical domains have different dimensions.

Another way of generating inequivalent solutions presents itself for dimensions higher than two via projection (which are nonunitary operators). Consider the operator $\operator{P}_{12} = \op{1} + \op{2}$. We can have $\operator{A}_{12} = \operator{P}_{12}\operator{A}\operator{P}_{12} + \operator{F}$ (where $[\operator{F},\operator{B}]=0$) being a canonical pair with $\operator{B}$,
\begin{equation} \label{eq:3dcpproj}
    \operator{A}_{12}=\mqty(a_1 && \frac{i\hbar}{B_{12}} && 0 \\ -\frac{i\hbar}{B_{12}} && a_2 && 0 \\ 0 && 0 && a_3) \, , \qquad \operator{B}=\mqty(B_1 && 0 && 0 \\ 0 && B_2 && 0 \\ 0 && 0 && B_3) \, ,
\end{equation}
with the one-dimensional canonical domain
\begin{equation}
    \cd_{12} = \qty{\ket{\varphi}= \frac{1}{\sqrt{2}} \mqty(1 \\ -1 \\ 0)} \, ,
\end{equation}
giving us a solution $\mathcal{C}_{12}(\operator{A}_{12},\operator{B},\cd_{12})$ inequivalent with $\mathcal{C}(\operator{A},\operator{B},\cd)$. Here, we end up with a canonical pair that looks similar to our solution in two-dimensional Hilbert space. Therefore, there exists more ways to generate inequivalent solutions in higher dimensions (see Appendix \ref{app:3dnondegen} for a complete list of all possible solutions to the three-dimensional canonical commutation relation for nondegenerate $\operator{B}$). For $\operator{U}(t)=\exp(-it\operator{B})$, the parameter invariant set is $\tis=\qty{t = 2n\pi/B_{12}, n\in\mathbb{Z}}$, and thus, inequivalent solutions arising from projections may have different parameter invariant sets.

Next, we look at the case when $\operator{B}$ contains degeneracy. Suppose the eigenvalue $B_1$ has degeneracy $M_1 = 2$ and $B_2$ has degeneracy $M_2 = 1$ (nondegenerate). Let $\ket{1} = \ket{1,1}$, $\ket{2} = \ket{1,2}$ and $\ket{3}=\ket{2,1}$. For $\braket{k|\operator{A}|l} = A_{kl} = A_{lk}^*$, \eqref{eq:Askdegenchareq} gives
\begin{equation}
    (i\hbar)^3 + i\hbar(\abs{A_{13}}^2 + \abs{A_{23}}^2)B_{12}^2 = 0 \, .
\end{equation}
We look at one possible solution (see Appendix \ref{app:3ddegen} for a list of all possible solutions),
\begin{equation} \label{eq:3dcpdegen}
    \operator{A}=\mqty(a_1 && b &&\frac{i\hbar}{\sqrt{2}B_{12}} \\ b^* && a_2 && \frac{i\hbar}{\sqrt{2}B_{12}} \\ -\frac{i\hbar}{\sqrt{2}B_{12}} && -\frac{i\hbar}{\sqrt{2}B_{12}} && a_3) \, , \qquad \operator{B}=\mqty(B_1 && 0 && 0 \\ 0 && B_1 && 0 \\ 0 && 0 && B_2) \, ,
\end{equation}
which is a canonical pair in the one-dimensional canonical domain
\begin{equation} \label{eq:3dcddegen}
    \cd^+ = \qty{\ket{\varphi}= \frac{1}{2} \mqty(-1 \\ -1 \\ \sqrt{2}) } \, ,
\end{equation}
where $a_1$, $a_2$, and $a_3$ are real, and $b$ is complex. Note here that it is the term $\smqty(a_1 && b && 0 \\ b^* && a_2 && 0 \\ 0 && 0 && a_3)$ in $\operator{A}$ that commutes with $\operator{B}$. We also have the essentially canonical pair $(\operator{A},\operator{B})$ with respect to $-i\hbar$ with the canonical domain $\cd^- = \qty{2^{-1} \smqty( 1 \\ 1 \\ \sqrt{2})}$; this can be mapped to a canonical pair $(-\operator{A},\operator{B})$ with the same canonical domain $\cd^-$. The two solutions $\mathcal{C}(\operator{A},\operator{B},\cd^+)$ and $\mathcal{C}(-\operator{A},\operator{B},\cd^-)$ are equivalent with respect to the unitary operator $\operator{U}(\alpha) = \exp(-i\alpha\operator{B}/B_{12})$ with $\alpha = \pi$. Finally, given the unitary $\operator{U}(t)=\exp(-it\operator{B})$, then for $\ket{\varphi}\in\cd^\pm$, we have $\operator{U}(t)\ket{\varphi}\in\cd^\pm$ for $t$ in $\tis = \qty{t = 2n\pi/B_{12}, n \in \mathbb{Z}}$. We see that the parameter invariant set also depends on the degeneracy of $\operator{B}$.

\section{Canonical pairs in N-dimensional Hilbert space} \label{sec:nd}
We then proceed with the general case.
\begin{theorem} \label{thm:nddim}
    The canonical domain of a canonical pair in an $N$-dimensional Hilbert space is a closed subspace of the Hilbert space with dimension of at most $N-1$.
\end{theorem}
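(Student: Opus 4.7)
The plan is to combine the identification of the canonical domain as an eigenspace of the commutator with the degeneracy bound already established in Theorem \ref{thm:degen}. Since the canonical pair $(\operator{A},\operator{B})$ satisfies $[\operator{A},\operator{B}]\ket{\varphi}=i\hbar\ket{\varphi}$ for every $\ket{\varphi}\in\cd$, the Remark following the definition of the essentially canonical commutation relation tells us that $\cd$ is precisely the eigenspace of $\operator{C}=[\operator{A},\operator{B}]$ associated with the eigenvalue $i\hbar$.

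First, I would argue closedness: in a finite-dimensional Hilbert space, every linear subspace is automatically closed in the norm topology; in particular, an eigenspace of a matrix is the kernel of $\operator{C}-i\hbar\operator{I}_\hilbertspace$ and hence closed. This disposes of the topological part of the statement in one sentence.

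Next, I would bound the dimension. The dimension of $\cd$ equals the geometric multiplicity of $i\hbar$ as an eigenvalue of $\operator{C}$, which is in turn bounded above by its algebraic multiplicity, i.e., by the degeneracy of $i\hbar$. Theorem \ref{thm:degen} states that any eigenvalue of a commutator on $\mathbb{C}^N$ has degeneracy at most $N-1$, since otherwise $\operator{C}$ would be a nonzero scalar multiple of $\operator{I}_\hilbertspace$ (forbidden by Theorem \ref{thm:nogo}) or vanish identically (excluded by definition of a commutator, as $i\hbar\neq 0$). Applying this to the eigenvalue $i\hbar$ yields $\dim\cd\leq N-1$.

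I do not expect any real obstacle here; the statement is essentially a corollary of Theorem \ref{thm:degen} together with the observation that in finite dimensions every subspace is closed. The only point worth being careful about is ensuring that $\cd$ is the full eigenspace (rather than a proper subspace of it), but this follows from the convention adopted in the Remark, where the canonical domain is taken to be the eigenspace corresponding to the eigenvalue $c=i\hbar$. If one prefers the definition that merely requires $\cd$ to be \emph{some} subspace on which the canonical commutation relation holds, then $\dim\cd$ is bounded by the dimension of this eigenspace, and the same conclusion follows a fortiori.
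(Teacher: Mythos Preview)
Your proposal is correct and follows essentially the same approach as the paper: identify the canonical domain with the eigenspace of $[\operator{A},\operator{B}]$ for the eigenvalue $i\hbar$, then invoke Theorem~\ref{thm:degen} to bound its dimension by $N-1$. Your treatment is in fact slightly more careful than the paper's, explicitly justifying closedness and noting the geometric-versus-algebraic multiplicity distinction, but the core argument is the same.
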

\begin{proof}
    From Theorem \ref{thm:degen}, the degeneracy of the commutator with eigenvalue $i\hbar$ is at most $N-1$, and thus, the dimension of its eigenspace is at most $N-1$. Since its eigenspace is the canonical domain, then it follows that the canonical domain is a closed subspace of the Hilbert space with dimension of at most $N-1$.
\end{proof}

\begin{remark}
    Solutions to the canonical commutation relation in finite-dimensional Hilbert space are always of closed-category, i.e., no dense-category solution exists.
\end{remark}

For nondegenerate $\operator{B}\ket{s}=B_s\ket{s}$, $s = 1, 2, \dotsc, N$, we can construct the canonical pair \cite{SPP-2024-PC-20}
\begin{equation} \label{eq:ndcpa}
    \operator{A} = \sum_{s=1}^N \sum_{\substack{s' = 1 \\ s' \neq s}}^N \frac{i\hbar}{B_s - B_{s'}} \ket{s}\bra{s'} + \sum_{s=1}^N a_s \ket{s}\bra{s} \, ,
\end{equation}
\begin{equation} \label{eq:ndcpb}
    \operator{B} = \sum_{s=1}^N B_s \ket{s}\bra{s} \, ,
\end{equation}
with the canonical domain
\begin{equation} \label{eq:ndcd}
    \cd = \qty{ \ket{\varphi} = \sum_{l=1}^{N-1} \sum_{k = l+1}^N c_{kl} (\ket{k} - \ket{l}), c_{kl} \in \mathbb{C}} \, .
\end{equation}
Suppose $\operator{U}(t) = \exp(-i\operator{B}t)$, and let $\delta(\operator{B})$ be the set of all $B_k - B_l$ (for all $k > l$), i.e., the set of all absolute differences of the eigenvalues of $\operator{B}$. The parameter invariant set is then
\begin{equation}
    \tis = \qty{ t = \frac{2n\pi}{\gcd(\delta(\operator{B}))} , n \in \mathbb{Z} }
\end{equation}
where $\gcd(\delta(\operator{B}))$ is the greatest common divisor of the set $\delta(\operator{B})$. We see that the parameter invariant set has total measure zero, and depends on the nature of the eigenvalues of $\operator{B}$. Equivalent solutions and inequivalent solutions arising from varying $a_s$ do not affect the parameter invariant set. We then have the class of solutions $\mathcal{C}_N(\operator{A},\operator{B},\cd;\exp(-i\operator{B}t),\tis)$ in $N$-dimensional Hilbert space.

\begin{proposition} \label{prop:dim}
    The subspace \eqref{eq:ndcd} has dimension of $N-1$.
\end{proposition}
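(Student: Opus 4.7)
The plan is to prove the equality $\dim \cd = N-1$ by establishing matching upper and lower bounds.

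For the upper bound, I will exhibit a single linear constraint that every element of $\cd$ satisfies. Each generating vector $\ket{k}-\ket{l}$ has the property that the sum of its coordinates in the $\{\ket{s}\}$ basis is zero. Consequently, any $\ket{\varphi} = \sum_{l,k} c_{kl}(\ket{k}-\ket{l})$ can be written as $\ket{\varphi} = \sum_{s=1}^N \varphi_s \ket{s}$ where $\sum_{s=1}^N \varphi_s = 0$. Hence $\cd$ is contained in the hyperplane $\mathcal{V} = \{\ket{\varphi} \in \hilbertspace : \sum_s \varphi_s = 0\}$, which has dimension $N-1$, giving $\dim \cd \leq N-1$.

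For the lower bound, I will display an explicit linearly independent family of $N-1$ vectors inside $\cd$. A natural choice is $\{\ket{k+1} - \ket{k}\}_{k=1}^{N-1}$ (or equivalently $\{\ket{k}-\ket{1}\}_{k=2}^N$); each lies in $\cd$ by definition. Linear independence follows from a short argument: if $\sum_{k=1}^{N-1} \lambda_k(\ket{k+1}-\ket{k}) = 0$, then comparing the coefficient of $\ket{N}$ forces $\lambda_{N-1}=0$, and an induction downward forces every $\lambda_k = 0$. This yields $\dim \cd \geq N-1$.

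Combining the two bounds gives $\dim \cd = N-1$. As a by-product this shows $\cd = \mathcal{V}$, i.e., every element of the codimension-one hyperplane is in fact realizable as a combination of the generators, so the redundancy among the $\binom{N}{2}$ generators $\ket{k}-\ket{l}$ is exactly accounted for by the relations $(\ket{k}-\ket{l}) = (\ket{k}-\ket{m}) - (\ket{l}-\ket{m})$.

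I do not anticipate any genuine obstacle here; the main step requiring care is simply recognizing that the generators all lie in the hyperplane of zero-sum coordinate vectors, so that the upper bound is automatic without having to manipulate the double sum over $c_{kl}$ directly.
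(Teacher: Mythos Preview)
Your proof is correct and closely parallels the paper's argument. For the lower bound both you and the paper exhibit the $N-1$ linearly independent vectors $\ket{k}-\ket{1}$ (for $k=2,\dots,N$); for the upper bound the paper writes each remaining generator $\ket{k}-\ket{l}$ explicitly as $(\ket{k}-\ket{1})-(\ket{l}-\ket{1})$, whereas you instead observe that all generators lie in the zero-sum hyperplane---an equivalent and slightly more conceptual way to cap the dimension at $N-1$.
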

\begin{proof}
    The $l=1$ term in \eqref{eq:ndcd} has $N-1$ linearly independent vectors: $\ket{2} - \ket{1}, \ket{3} - \ket{1}, \dotsc, \ket{N-1} - \ket{1}, \ket{N} - \ket{1}$. For all other $l > 1$ with $k > l$ terms, note that any $\ket{k} - \ket{l}$ can be written as $(\ket{k} - \ket{1}) - (\ket{l} - \ket{1})$, and are thus linearly dependent. Therefore, the subspace \eqref{eq:ndcd} has dimension $N - 1$.
\end{proof}

\begin{remark}
    From Theorem \ref{thm:nddim}, we know that the largest possible canonical domain has dimension $N-1$. With Proposition \ref{prop:dim}, we see that the canonical pair given by \eqref{eq:ndcpa} and \eqref{eq:ndcpb} provides a way of constructing canonical pairs with the largest possible canonical domain.
\end{remark}

Notice the similarity with the characteristic (time) operator in \textit{infinite}-dimensional Hilbert space \cite{galapon2002self}
\begin{equation}
    \operator{A} = \sum_{s=1}^\infty \sum_{\substack{s' = 1 \\ s' \neq s}}^\infty \frac{i\hbar}{B_s - B_{s'}} \ket{s}\bra{s'} + \sum_{s=1}^\infty a_s \ket{s}\bra{s} \, ,
\end{equation}
\begin{equation}
    \operator{B} = \sum_{s=1}^\infty B_s \ket{s}\bra{s} \, ,
\end{equation}
\begin{equation}
    \cd = \qty{ \ket{\varphi} = \sum_{l=1}^{M-1} \sum_{k = l+1}^M c_{kl} (\ket{k} - \ket{l}), c_{kl} \in \mathbb{C}, M < \infty} \, ,
\end{equation}
wherein $\cd$ is dense, giving the dense-category solution $\mathcal{C}_\infty(\operator{A},\operator{B},\cd)$ for $\sum_{s=1}^\infty B_s^{-1} < \infty$. Suppose once again that $\operator{U}(t) = \exp(-i\operator{B}t)$ and $\delta(\operator{B})$ be the set of all $B_k - B_l$. The parameter invariant set is similarly \cite{farrales2025characteristic}
\begin{equation}
    \tis = \qty{ t = \frac{2n\pi}{\gcd(\delta(\operator{B}))} , n \in \mathbb{Z} } \, ,
\end{equation}
giving the class of solutions $\mathcal{C}_\infty(\operator{A},\operator{B},\cd;\exp(-i\operator{B}t),\tis)$. The class of closed-category solutions $\mathcal{C}_N$ in $N$-dimensional Hilbert space is just the projection of the infinite-dimensional solution $\mathcal{C}_\infty$ to a subspace of the infinite-dimensional Hilbert space spanned by $\ket{1}, \ket{2}, \dotsc, \ket{N-1}, \ket{N}$. In other words, the projection of the characteristic operator to an $N$-dimensional subspace corresponds to a canonical pair with the largest possible canonical domain in that subspace. However, it does not necessarily work the other way around: letting $N \to \infty$ for the $N$-dimensional canonical solution does not necessarily give a canonical pair in infinite-dimensional Hilbert space as the condition on $B_s$ may not be satisfied.

Consider a projection of $\operator{A}$ to the two-dimensional subspace spanned by $\ket{k}$ and $\ket{l}$. We have $\operator{A} = (i\hbar/B_{kl})(\ket{k}\bra{l} - \ket{l}\bra{k})$ and $\operator{B} = \sum_{s=1}^N B_s \ket{s}\bra{s}$ being an essentially canonical pair in the one-dimensional canonical domain $\cd^\pm = \qty{2^{-1/2} (\ket{k} \mp \ket{l}) }$, i.e., $[\operator{A},\operator{B}]\ket{\varphi_\pm} = \pm i\hbar \ket{\varphi_\pm}$ for $\ket{\varphi_\pm}\in\cd^\pm$. Firstly, we have the canonical solution $\mathcal{C}(\operator{A},\operator{B},\cd^+)$. Secondly, the essentially canonical solution with respect to $-i\hbar$ can be mapped to the canonical solution $\mathcal{C}(-\operator{A},\operator{B},\cd^-)$, and is an equivalent solution to $\mathcal{C}(\operator{A},\operator{B},\cd^+)$ with respect to $\operator{U} = \exp(i\pi \operator{B}/B_{kl})$. Note that the Hilbert space minus the two canonical domains $\cd^\pm$ can be written as $\hilbertspace \setminus(\cd^+ \cup \cd^-)=\qty{\ket{\varphi_\perp} = \sum_{\substack{s = 1 \\ s \neq k,l}}^N \varphi_s \ket{s},\varphi_s \in \mathbb{C}}$, and with the commutator explicitly written as $[\operator{A},\operator{B}] = -i\hbar(\ket{k}\bra{l} + \ket{l}\bra{k})$, we then see that $[\operator{A},\operator{B}]\ket{\varphi_\perp} = 0$. The commutator thus only has two nonzero nondegenerate eigenvalues, $\pm i\hbar$, which corresponds to two equivalent solutions (when one is mapped to the essentially canonical relation of the other) with one-dimensional canonical domains.

The solutions for the 3-dimensional case in Appendix \ref{app:3dsoln} suggest that the general solution in the $N$-dimensional case has the form
\begin{equation} \label{eq:ndcpagen}
    \operator{A} = \sum_{s=1}^N \sum_{\substack{s' = 1 \\ s' \neq s}}^N \beta_{s,s'}\frac{i\hbar}{B_s - B_{s'}} e^{i\alpha_{s,s'}} \ket{s}\bra{s'} + \sum_{s=1}^N a_s \ket{s}\bra{s} \, ,
\end{equation}
which forms a canonical pair with the nondegenerate $\operator{B}$ for real $\alpha_{s,s'}$ and complex $\beta_{s,s'}$ where $\alpha_{s,s'} = -\alpha_{s',s}$ and $\beta_{s,s'} = \beta_{s',s}^*$, where $\alpha_{s,s'}$ and $\beta_{s,s'}$ should be such that \eqref{eq:Aklchareq} is satisfied. The form of the canonical domain would then depend on $\alpha_{s,s'}$ and $\beta_{s,s'}$,
\begin{equation}
    \cd = \qty{\ket{\varphi} = \sum_{s=1}^N \varphi_s \ket{s}, \sum_{\substack{s' = 1 \\ s' \neq s}}^N \beta_{s,s'} e^{i\alpha_{s,s'}} \varphi_{s'} = -\varphi_s,\varphi_s\in\mathbb{C}} \, .
\end{equation}

Similarly, for degenerate $\operator{B}$, we have the general form
\begin{align} \label{eq:ndcpadegengen}
    \operator{A} &= \sum_{s=1}^L \sum_{\substack{s' = 1 \\ s' \neq s}}^L \sum_{r=1}^{M_s} \sum_{r' = 1}^{M_{s'}} \beta_{s,r,s',r'}\frac{i\hbar}{B_s - B_{s'}} e^{i\alpha_{s,r,s',r'}} \ket{s,r}\bra{s',r'} \nonumber \\
    &\quad + \sum_{s=1}^L \sum_{r=1}^{M_s} a_{s,r} \ket{s,r}\bra{s,r} + \sum_{s=1}^L \sum_{r=1}^{M_s} \sum_{\substack{r' = 1 \\ r' \neq r}}^{M_s} b_{s,r,r'} \ket{s,r}\bra{s,r'} \, ,
\end{align}
forming a canonical pair with $\operator{B} = \sum_{s=1}^L \sum_{r=1}^{M_s} B_s \ket{s,r}\bra{s,r}$ for values of complex $\beta_{s,r,s',r'}=\beta_{s',r',s,r}^*$ and $b_{s,r,r'}=b_{s,r',r}^*$, and real $\alpha_{s,r,s',r'}=-\alpha_{s',r',s,r}$ such that \eqref{eq:Askdegenchareq} is satisfied, with the canonical domain
\begin{equation} \label{eq:ndcddegengen}
    \cd = \qty{ \ket{\varphi} = \sum_{s=1}^L \sum_{r=1}^{M_s} \varphi_{s,r} \ket{s,r}, \sum_{\substack{s' = 1 \\ s' \neq s}}^L \sum_{r'=1}^{M_{s'}} \beta_{s,r,s',r'} e^{i\alpha_{s,r,s',r'}}\varphi_{s',r'}=\varphi_{s,r},\varphi_{s,r}\in\mathbb{C}} \, .
\end{equation}
This looks similar to the result in \cite{galapon2002self,SPP-2024-PC-20} but here we considered the general case wherein the degeneracy is not constant (i.e., $M_s$ is not necessarily the same for all $s$). We recover the nondegenerate solution if $M_s = 1$ for all $s$. Note that when $N=\sum_{s=1}^L M_s$ approaches infinity and assuming $\sum_s B_s^{-1} < \infty$, we also obtain the infinite-dimensional result with $\operator{B}$ containing degeneracy, and thus, the finite-dimensional result is once again seen to be its projection to a smaller subspace.

Essentially, the method of obtaining all possible solutions starts with $\operator{A}$ in \eqref{eq:ndcpadegengen}: adjust $\alpha_{s,r,s',r'}$ to obtain equivalent solutions, vary $a_{s,r}$ and $b_{s,r,r'}$ to adjust the term commuting with $\operator{B}$ which gives either equivalent or inequivalent solutions, or adjust $\beta_{s,r,s',r'}$ which could be thought of as the coefficients of the sum of all possible 2D projections of $\operator{A}$ to obtain inequivalent solutions. That is, if $(\operator{A},\operator{B})$ is a canonical pair, then so is $(\sum_{s,r,s',r'} \beta_{s,r,s',r'} \operator{P}_{s,r,s',r'}\operator{A}\operator{P}_{s,r,s',r'},\operator{B})$, where $\operator{P}_{s,r,s',r'} = \op{s,r} + \op{s',r'}$ and for some appropriate values of $\beta_{s,r,s',r'}$. 

Equivalent solutions share the same $\tis$. Equivalent and inequivalent solutions arising from adding a term that commutes with $\operator{B}$ also have the same $\tis$. Meanwhile, inequivalent solutions obtained via projection may have a different parameter invariant set.

\begin{proposition} \label{prop:tis}
    Consider the class of solutions $\mathcal{C}(\operator{A},\operator{B},\cd;\exp(-i\operator{B}t),\tis)$. Let $\operator{B}\ket{s} = B_s\ket{s}$ and $\operator{P}_s = \ket{s}\bra{s}$. The parameter invariant set $\tis$ is then
    \begin{equation}
        \tis = \qty{ t = \frac{2n\pi}{\gcd( \tilde{\delta}(\operator{B}))}, n \in \mathbb{Z} }
    \end{equation}
    where $\gcd$ is the greatest common divisor, and $\tilde{\delta}(\operator{B})$ is the set of all eigenvalue differences $B_k - B_l$ not including the differences that contain $B_s$ if $\operator{P}_s\ket{\varphi} = 0$ for all $\ket{\varphi}\in\cd$.
\end{proposition}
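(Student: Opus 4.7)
The plan is to exploit the fact that $\exp(-i\operator{B}t)$ acts diagonally in the eigenbasis of $\operator{B}$, mapping $\ket{\varphi} = \sum_s \varphi_s \ket{s}$ to $\sum_s e^{-iB_s t} \varphi_s \ket{s}$. I would first single out the set of involved indices $S = \{s : \operator{P}_s\ket{\varphi} \neq 0 \text{ for some } \ket{\varphi} \in \cd\}$, noting that $\cd \subseteq \operatorname{span}\{\ket{s} : s \in S\}$ and that the phases $e^{-iB_s t}$ for $s \notin S$ never contribute to the action of $\exp(-i\operator{B}t)$ on $\cd$. This explains directly why the definition of $\tilde{\delta}(\operator{B})$ discards any difference $B_k - B_l$ involving an uninvolved index.

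The heart of the argument is the equivalence: $\exp(-i\operator{B}t)$ preserves $\cd$ if and only if the phases $e^{-iB_s t}$ agree for all $s \in S$. The sufficient direction is immediate---if all such phases equal a common value $e^{-iB_0 t}$, then $\exp(-i\operator{B}t)$ restricts to multiplication by this scalar on $\operatorname{span}\{\ket{s} : s \in S\}$, so every $\ket{\varphi} \in \cd$ is sent to $e^{-iB_0 t}\ket{\varphi}$, which lies in $\cd$. For the necessary direction I would invoke the explicit form of $\cd$ from \eqref{eq:ndcd} (and its degenerate and projected analogues): for any distinct $k, l \in S$ the vector $\ket{k} - \ket{l}$ lies in $\cd$, so its image $e^{-iB_k t}\ket{k} - e^{-iB_l t}\ket{l}$ must also lie in $\cd$, which by the defining constraint that coefficients sum to zero forces $e^{-iB_k t} = e^{-iB_l t}$, i.e. $(B_k - B_l) t \in 2\pi\mathbb{Z}$.

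Combining the two directions, $t \in \tis$ exactly when $\delta t \in 2\pi\mathbb{Z}$ for every $\delta \in \tilde{\delta}(\operator{B})$, and this set of $t$ is precisely the integer multiples of $2\pi/\gcd(\tilde{\delta}(\operator{B}))$---with the convention introduced earlier collapsing $\tis$ to $\{0\}$ in the incommensurate case. The main obstacle will be the necessary direction: one must verify that $\ket{k} - \ket{l}$ actually belongs to $\cd$ for every pair of involved indices, which is guaranteed by Proposition \ref{prop:dim} and by the explicit form of the canonical domain for the constructions of Section \ref{sec:nd}, where $\cd$ fills the codimension-one hyperplane inside $\operatorname{span}\{\ket{s} : s \in S\}$ cut out by the constraint $\sum_{s \in S}\varphi_s = 0$. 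For the more general canonical pairs \eqref{eq:ndcpagen}, the same conclusion follows by checking directly that the linear constraints defining $\cd$ are stable under $\varphi_s \mapsto e^{-iB_s t}\varphi_s$ precisely when all phases for $s \in S$ coincide.
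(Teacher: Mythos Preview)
Your approach is essentially the same as the paper's: expand in the eigenbasis of $\operator{B}$, observe that $\exp(-i\operator{B}t)$ acts by phases $e^{-iB_st}$, and argue that preservation of $\cd$ amounts to all relevant phases coinciding. Your treatment is in fact more careful than the paper's two-sentence sketch, which asserts rather than proves the equivalence; in particular you explicitly isolate the necessary direction and note where the specific structure of $\cd$ (containing the $\ket{k}-\ket{l}$ vectors, or more generally being cut out by constraints invariant only under a common phase) is actually used.
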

\begin{proof}
    We can write $\ket{\varphi} = \sum_j \alpha_j \ket{j} \in \cd$. If $\alpha_j \neq 0$ for all $j$, then $\exp(-i\operator{B}t)\ket{\varphi} = \sum_j \alpha_j \exp(-iB_j t) \ket{j} \in \cd$ for $t \in \tis = \qty{t = 2n\pi/\gcd(\delta(\operator{B})),n \in \mathbb{Z}}$ where $\delta(\operator{B})$ is the set of all $B_k - B_l$ ($k > l$). If at least one $\alpha_j$ is zero for all $\ket{\varphi}\in \cd$, i.e., for some $s$, $\operator{P}_s\ket{\varphi}=0$ for all $\ket{\varphi}\in\cd$, then $t \in \tis = \qty{t = 2n\pi/\gcd(\tilde{\delta}(\operator{B})),n\in\mathbb{Z}}$ where $\tilde{\delta}(\operator{B})$ is the set of all $B_k - B_l$ not including differences containing $B_s$.
\end{proof}

\noindent
Thus, inequivalent solutions arising from projections to a smaller subspace will have $\tilde{\delta}(\operator{B})$ not include differences containing $B_s$ if $\ket{s}$ is outside that smaller subspace.

\begin{remark}
    If there are two elements of $\tilde{\delta}(\operator{B})$ that are incommensurate (i.e., if there exists a $(B_k-B_l)/(B_{k'}-B_{l'})$ that is irrational), then the parameter invariant set is $\tis=\qty{t=0}$. Otherwise, it is countably infinite. For all cases, the parameter invariant set has total measure zero (Theorem \ref{thm:tis}).
\end{remark}

Next, we shall generalize Proposition \ref{prop:2dtracecomm} and recover the well-known theorem that every traceless matrix is a commutator \cite{putnam1967commutation,shoda1936einige}.

\begin{lemma} \label{lem:unitary}
    Given a traceless matrix $\operator{C}$, there exists a unitary matrix $\operator{U}$ such that the diagonal of $\operator{U}^\dagger\operator{C}\operator{U}$ consists only of zeros. 
\end{lemma}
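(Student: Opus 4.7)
The plan is to proceed by induction on the dimension $N$, with base case $N=2$ already handled by the explicit construction in the proof of Proposition \ref{prop:2dtracecomm} (the unitary $2^{-1/2}\smqty(1 & 1 \\ 1 & -1)$ conjugates any $2\times 2$ traceless matrix into one with zero diagonal). The induction step reduces, after one unitary rotation, to a smaller traceless matrix; the whole argument then hinges on a single auxiliary claim about numerical ranges.

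The auxiliary claim I would establish is: for every traceless $N\times N$ matrix $\operator{C}$, there exists a unit vector $\ket{v_1}$ with $\braket{v_1|\operator{C}|v_1}=0$. My plan for this is to invoke the Toeplitz--Hausdorff theorem, which asserts that the numerical range $W(\operator{C})=\{\braket{v|\operator{C}|v} : \|\ket{v}\|=1\}$ is a convex subset of $\mathbb{C}$. Taking $\ket{v}$ to be a unit eigenvector of $\operator{C}$ shows that each eigenvalue $\lambda_j$ lies in $W(\operator{C})$; by convexity the centroid $N^{-1}\sum_j \lambda_j = N^{-1}\tr(\operator{C})=0$ also lies in $W(\operator{C})$, yielding the desired $\ket{v_1}$.

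With $\ket{v_1}$ in hand, I would extend it to an orthonormal basis $\{\ket{v_1},\ket{v_2},\dots,\ket{v_N}\}$ and let $\operator{V}$ be the unitary whose columns are these vectors. Then $\operator{V}^\dagger \operator{C}\operator{V}$ has a zero in its $(1,1)$ entry and a block form
\begin{equation*}
    \operator{V}^\dagger \operator{C}\operator{V} = \mqty(0 & \bra{w} \\ \ket{u} & \operator{C}') ,
\end{equation*}
where $\operator{C}'$ is $(N-1)\times(N-1)$. Because trace is basis-independent, $\tr(\operator{C}')=\tr(\operator{C})-0=0$, so $\operator{C}'$ is traceless. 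By the inductive hypothesis there exists an $(N-1)\times(N-1)$ unitary $\operator{W}$ with zero-diagonal conjugate $\operator{W}^\dagger \operator{C}' \operator{W}$. Defining $\operator{U}=\operator{V}\,\mqty(1 & 0 \\ 0 & \operator{W})$, which is unitary as the product of unitaries, gives $\operator{U}^\dagger \operator{C}\operator{U}$ with the zero $(1,1)$ entry preserved and the remaining diagonal entries zeroed by $\operator{W}$, completing the induction.

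The main obstacle is the auxiliary claim; once it is available, the rest of the argument is a standard basis-extension plus induction. I would either cite Toeplitz--Hausdorff directly as a classical result or, if a self-contained treatment is preferred, prove just the special case needed (existence of $\ket{v_1}$ with $\braket{v_1|\operator{C}|v_1}=0$ when $\tr(\operator{C})=0$) by a continuity argument: the function $\ket{v}\mapsto \braket{v|\operator{C}|v}$ is continuous on the connected unit sphere, and one can join two eigenvectors corresponding to eigenvalues with different arguments by a continuous path along which the image sweeps through $0$ thanks to the intermediate-value argument applied in the two-dimensional subspace they span.
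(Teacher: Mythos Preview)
Your inductive argument via the Toeplitz--Hausdorff theorem is correct and is in fact the standard route to this result. It differs substantially from the paper's proof, which gives a one-shot explicit construction: working in an orthonormal eigenbasis $\ket{\varphi_s}$ of $\operator{C}$ with eigenvalues $c_s$, the paper takes $\operator{U}$ to be the discrete Fourier transform matrix, $\braket{\varphi_s|\operator{U}|\varphi_{s'}} = N^{-1/2}\exp\!\bigl(2\pi i(s-1)(s'-1)/N\bigr)$, and checks directly that every diagonal entry of $\operator{U}^\dagger\operator{C}\operator{U}$ equals $N^{-1}\sum_k c_k = 0$. The paper's approach is constructive and very short, but it tacitly assumes that $\operator{C}$ admits an orthonormal eigenbasis (i.e.\ is normal); your numerical-range argument handles arbitrary traceless matrices without that assumption, at the cost of being an existence proof rather than an explicit formula. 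One small wrinkle: your appeal to Proposition~\ref{prop:2dtracecomm} for the base case $N=2$ inherits the same normality assumption from the paper, so it is cleaner to start the induction at $N=1$ (where the statement is trivial) and let the Toeplitz--Hausdorff step itself cover $N=2$.
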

\begin{proof}
    Let $\operator{C} \ket{\varphi_s} = c_s \ket{\varphi_s}$, with $\sum_{s=1}^N c_s = 0$ since $\operator{C}$ is traceless. Consider the operator $\operator{U}$ with matrix elements $\braket{\varphi_s|\operator{U}|\varphi_{s'}} = N^{-1/2} \exp(i (s-1) (s'-1) 2\pi/N)$. We see that $\operator{U}$ is unitary:
    \begin{equation}
        \braket{\varphi_s|\operator{U}\operator{U}^\dagger|\varphi_{s'}} = \sum_{k=1}^N \braket{\varphi_s|\operator{U}|\varphi_k}\braket{\varphi_k|\operator{U}^\dagger|\varphi_{s'}} = \frac{1}{N} \sum_{k=1}^N e^{i (k-1) (s-s') \frac{2\pi}{N}} = \delta_{ss'} = \braket{\varphi_s|\operator{U}^\dagger\operator{U}|\varphi_{s'}} \, ,
    \end{equation}
    where $\delta_{ss'}$ is the Kronecker delta. Next, we see that the diagonal of $\operator{U}^\dagger\operator{C}\operator{U}$ consists only of zeros:
    \begin{equation}
        \braket{\varphi_s|\operator{U}^\dagger\operator{C}\operator{U}|\varphi_s} = \frac{1}{N} \sum_{k=1}^N c_k = 0 \, .
    \end{equation}
\end{proof}

\begin{theorem} \label{thm:commN}
    Every nontrivial $N \times N$ traceless matrix is a commutator.
\end{theorem}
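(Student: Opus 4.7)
The plan is to generalize the argument of Proposition \ref{prop:2dtracecomm} using Lemma \ref{lem:unitary} as the key reduction step. The idea is that if I can exhibit the hollowed-out matrix $\tilde{\operator{C}} = \operator{U}^\dagger \operator{C} \operator{U}$ as a commutator $[\operator{A}, \operator{B}]$, then by conjugating back I immediately get $\operator{C} = [\operator{U}\operator{A}\operator{U}^\dagger, \operator{U}\operator{B}\operator{U}^\dagger]$, which is also a commutator. So the problem reduces to realizing any zero-diagonal matrix as a commutator.

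First, I would invoke Lemma \ref{lem:unitary} to produce a unitary $\operator{U}$ such that $\tilde{\operator{C}} = \operator{U}^\dagger \operator{C} \operator{U}$ has all diagonal entries equal to zero, writing $\tilde{C}_{kl} = \braket{k|\tilde{\operator{C}}|l}$ in some orthonormal basis $\{\ket{s}\}$. Next I would pick any diagonal operator $\operator{B} = \sum_{s=1}^N B_s \op{s}$ whose eigenvalues $B_1,\dotsc,B_N$ are pairwise distinct (for instance $B_s = s$). Then I would define
\begin{equation}
    \operator{A} = \sum_{\substack{k,l=1 \\ k \neq l}}^N \frac{\tilde{C}_{kl}}{B_l - B_k} \ket{k}\bra{l} \, ,
\end{equation}
with the diagonal of $\operator{A}$ chosen arbitrarily (zero, say).

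The verification is direct: the $(k,l)$ matrix element of the commutator is $[\operator{A},\operator{B}]_{kl} = A_{kl}(B_l - B_k)$, which equals $\tilde{C}_{kl}$ for $k \neq l$ by construction, and equals $0 = \tilde{C}_{kk}$ on the diagonal since $\tilde{\operator{C}}$ was hollowed out. Hence $\tilde{\operator{C}} = [\operator{A},\operator{B}]$, and conjugating gives $\operator{C} = [\operator{U}\operator{A}\operator{U}^\dagger,\operator{U}\operator{B}\operator{U}^\dagger]$, showing $\operator{C}$ is a commutator. The result is nontrivial because $\operator{C} \neq 0$ implies $\tilde{\operator{C}} \neq 0$, which forces $\operator{A} \neq 0$, so the commutator is indeed nontrivial as required by the definition.

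There is no substantial obstacle here beyond Lemma \ref{lem:unitary} itself, which was already established; the rest of the proof is essentially the $N$-dimensional analog of the construction in \eqref{eq:ndcpa}, with the roles reversed (given $\operator{C}$, solve for $\operator{A}$ matrix-element by matrix-element using the nonvanishing denominators $B_l - B_k$). The only thing one must be careful about is insisting that $\operator{B}$ be chosen nondegenerate so the denominators never vanish, but this is a free choice. I would conclude by remarking that this recovers Shoda's theorem \cite{shoda1936einige} as advertised in the text preceding the lemma.
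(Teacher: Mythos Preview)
Your proposal is correct and follows essentially the same route as the paper: invoke Lemma \ref{lem:unitary} to kill the diagonal, take a diagonal $\operator{B}$, solve for the off-diagonal entries of $\operator{A}$ via $A_{kl}=\tilde{C}_{kl}/(B_l-B_k)$, and conjugate back. If anything, you are slightly more careful than the paper in explicitly requiring the $B_s$ to be pairwise distinct so that the denominators never vanish.
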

\begin{proof}
    Consider the traceless $\operator{C} = \sum_{s=1}^N c_s \op{\varphi_s}$ where $\sum_{s=1}^N c_s = 0$, at least two $c_s$ are nonzero, and $\braket{\varphi_s|\varphi_k}=\delta_{sk}$. From Lemma \ref{lem:unitary}, there exists a unitary operator $\operator{U}$ such that in the new basis $\ket{s} = \operator{U}\ket{\varphi_s}$, the diagonal of $\operator{C}$ consists only of zeros.

    Next, we construct the operator $\operator{B} = \sum_{s=1}^N B_s \op{s}$ where $B_s$ are constants. We then have the operator
    \begin{equation}
        \operator{A} = \sum_{s=1}^N \sum_{\substack{s' = 1 \\ s' \neq s}}^N \frac{\braket{s|\operator{C}|s'}}{B_{s'} - B_s} \ket{s}\bra{s'} + \sum_{s=1}^N a_s \ket{s}\bra{s} \, ,
    \end{equation}
    where $a_s$ are constants, such that $\operator{U}^\dagger\operator{C}\operator{U} = [\operator{A},\operator{B}]$. Therefore, $\operator{C}=[\operator{U}\operator{A}\operator{U}^\dagger,\operator{U}\operator{B}\operator{U}^\dagger]$ is a commutator with $\operator{C}\ket{\varphi_s} = c_s\ket{\varphi_s}$.
\end{proof}

\begin{proposition} \label{prop:everyN}
    To every element $\ket{\varphi}$ of the $N$-dimensional Hilbert space $\hilbertspace$, there exists a commutator $\operator{C}$ such that $\operator{C}\ket{\varphi}=c\ket{\varphi}$, where $c$ is a nonzero constant.
\end{proposition}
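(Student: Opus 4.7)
The plan is to mimic the two-dimensional construction of Proposition \ref{prop:2devery} and leverage the newly established Theorem \ref{thm:commN} (every nontrivial traceless matrix is a commutator) to avoid having to explicitly exhibit the pair $(\operator{A},\operator{B})$.

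First, I would extend the normalized vector $\ket{\varphi}$ to an orthonormal basis $\qty{\ket{\varphi},\ket{\phi_1},\ket{\phi_2},\dotsc,\ket{\phi_{N-1}}}$ of $\hilbertspace$, which is always possible by Gram--Schmidt in $\mathbb{C}^N$. Then, pick a nonzero constant $c$ and define
\begin{equation}
    \operator{C} = c \op{\varphi} - c \op{\phi_1}.
\end{equation}
By construction, $\operator{C}\ket{\varphi} = c\ket{\varphi}$ and $\operator{C}$ is Hermitian (though Hermiticity is not needed), nontrivial (since $c \neq 0$), and traceless, since its only nonzero eigenvalues are $c$ and $-c$ while all other basis vectors $\ket{\phi_j}$ with $j \geq 2$ lie in the kernel.

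Having produced a nontrivial traceless operator $\operator{C}$ with $\ket{\varphi}$ as an eigenket to the nonzero eigenvalue $c$, I would invoke Theorem \ref{thm:commN} to conclude that $\operator{C}$ is a commutator, i.e., there exist operators $\operator{A}$ and $\operator{B}$ on $\hilbertspace$ with $\operator{C}=[\operator{A},\operator{B}]$. This gives exactly the essentially canonical relation $[\operator{A},\operator{B}]\ket{\varphi}=c\ket{\varphi}$ required by the proposition.

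There is no real obstacle here; the argument reduces the $N$-dimensional case to the $2$-dimensional template and immediately appeals to Theorem \ref{thm:commN}. If one wanted a more general statement (for instance, prescribing additional eigendata of $\operator{C}$ beyond $\ket{\varphi}$), the only subtlety would be ensuring that the chosen eigenvalues sum to zero while keeping $c$ nonzero, but for the bare assertion of the proposition the simple two-eigenvalue choice above suffices.
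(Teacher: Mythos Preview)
Your proposal is correct and follows essentially the same approach as the paper: construct a traceless diagonal operator in an orthonormal basis containing $\ket{\varphi}$ with $c\neq 0$ as the eigenvalue on $\ket{\varphi}$, then invoke Theorem~\ref{thm:commN}. The only cosmetic difference is that the paper leaves the remaining eigenvalues $c_s$ general (subject to $\sum_s c_s=0$ and at least two nonzero), whereas you commit to the minimal choice $c,-c,0,\dotsc,0$.
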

\begin{proof}
    For a given $\ket{\varphi}$, consider $\operator{C} = \sum_{s=1}^N c_s \op{\varphi_s}$ where $\sum_{s=1}^N c_s = 0$, at least two $c_s$ are nonzero, and $\braket{\varphi_s|\varphi_k}=\delta_{sk}$, such that (for $s = r$) $c_r = c \neq 0$ and $\ket{\varphi_r} = \ket{\varphi}$. Thus, $\operator{C}\ket{\varphi} = c \ket{\varphi}$. From Theorem \ref{thm:commN}, $\operator{C}$ is a commutator.
\end{proof}

\section{Uncertainty relation} \label{sec:uncertainty}
The uncertainty relation between a pair of operators is related to the commutation relation they satisfy \cite{hall2013quantum}.

\begin{definition}
    The uncertainty of an operator $\operator{A}$ with respect to the state $\ket{\varphi} \in \hilbertspace$ is given by
    \begin{equation} \label{eq:unc}
        \Delta_\varphi \operator{A} = \sqrt{\braket{\varphi|(\operator{A} - \braket{\varphi|\operator{A}|\varphi}\operator{I}_\hilbertspace)^2|\varphi}} = \sqrt{\braket{\varphi|\operator{A}^2|\varphi} - \braket{\varphi|\operator{A}|\varphi}^2} \, ,
    \end{equation}
    where $\operator{I}_\hilbertspace$ is the identity in $\hilbertspace$.
\end{definition}

\begin{lemma} \label{lem:unczero}
    $\Delta_\varphi \operator{A} = 0$ if and only if $\ket{\varphi}$ is an eigenket of $\operator{A}$.
\end{lemma}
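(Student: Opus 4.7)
The plan is to prove the two directions separately, using only the algebraic definition of $\Delta_\varphi \operator{A}$ together with the implicit Hermiticity of $\operator{A}$ (which is the standing assumption for the observables considered throughout the paper; without it, the quantity under the square root in \eqref{eq:unc} need not even be real and nonnegative).

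For the forward direction, I would suppose $\operator{A}\ket{\varphi} = a\ket{\varphi}$ for some scalar $a$, with $\braket{\varphi|\varphi}=1$. A direct computation then gives $\braket{\varphi|\operator{A}|\varphi} = a$ and $\braket{\varphi|\operator{A}^2|\varphi} = a^2$, so that the right-hand side of \eqref{eq:unc} vanishes.

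For the reverse direction, set $a := \braket{\varphi|\operator{A}|\varphi}$, which is real since $\operator{A}$ is Hermitian. The hypothesis $\Delta_\varphi \operator{A} = 0$ means
\begin{equation*}
    \braket{\varphi|(\operator{A} - a\operator{I}_\hilbertspace)^2|\varphi} = 0 .
\end{equation*}
Because $\operator{A} - a\operator{I}_\hilbertspace$ is itself Hermitian, this expectation equals the squared norm $\norm{(\operator{A} - a\operator{I}_\hilbertspace)\ket{\varphi}}^2$. A zero norm forces $(\operator{A} - a\operator{I}_\hilbertspace)\ket{\varphi}=0$, i.e., $\operator{A}\ket{\varphi} = a\ket{\varphi}$, so $\ket{\varphi}$ is an eigenket of $\operator{A}$ with eigenvalue $a$.

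There is no serious obstacle here; the only subtlety is remembering that the rewriting of the expectation value as a squared norm relies on Hermiticity of $\operator{A} - a\operator{I}_\hilbertspace$, which in turn relies on the Hermiticity of $\operator{A}$ and the realness of $a = \braket{\varphi|\operator{A}|\varphi}$.
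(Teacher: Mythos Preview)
Your proof is correct and follows essentially the same approach as the paper's own proof: both directions are handled by direct computation, with the key step being that $\braket{\varphi|(\operator{A}-a\operator{I}_\hilbertspace)^2|\varphi}=0$ forces $(\operator{A}-a\operator{I}_\hilbertspace)\ket{\varphi}=0$. Your write-up is actually more explicit than the paper's, which simply says the former ``implies'' the latter without spelling out the squared-norm justification via Hermiticity that you provide. (A minor quibble: what you call the ``forward direction'' is the converse of the if-and-only-if as stated, but this is purely a labeling issue.)
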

\begin{proof}
    If $\Delta_\varphi \operator{A} = 0$, then from \eqref{eq:unc}, $\braket{\varphi|(\operator{A} - \braket{\varphi|\operator{A}|\varphi}\operator{I}_\hilbertspace)^2|\varphi} = 0$, implying $\operator{A}\ket{\varphi}= \braket{\varphi|\operator{A}|\varphi}\ket{\varphi}$, i.e., $\ket{\varphi}$ is an eigenket of $\operator{A}$ with eigenvalue $\braket{\varphi|\operator{A}|\varphi}$. Conversely, if $\operator{A}\ket{\varphi} = a \ket{\varphi}$, then $\Delta_\varphi \operator{A} = 0$.
\end{proof}

\begin{proposition}
    Given an essentially canonical pair $(\operator{A},\operator{B})$ with the canonical domain $\cd$, then
    \begin{equation}
        (\Delta_\varphi \operator{A}) (\Delta_\varphi \operator{B}) \neq 0 \, ,
    \end{equation}
    for $\ket{\varphi} \in \cd$.
\end{proposition}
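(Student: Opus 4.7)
The proof plan is essentially to chain together two earlier results, Theorem \ref{thm:noeigenket} and Lemma \ref{lem:unczero}, with a contrapositive step in between. The statement of the proposition, taken in context with the definition of uncertainty and the fact that essentially canonical pairs enter via Theorem \ref{thm:noeigenket}, presupposes that $(\operator{A},\operator{B})$ is Hermitian; I would begin by noting this, since only then is $\Delta_\varphi \operator{A}$ a well-defined real non-negative quantity via the formula in \eqref{eq:unc}.

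The key observation is that $(\Delta_\varphi \operator{A})(\Delta_\varphi \operator{B}) = 0$ if and only if at least one of the two factors vanishes. So I would reason by contradiction: suppose $\ket{\varphi}\in\cd$ satisfies $(\Delta_\varphi \operator{A})(\Delta_\varphi \operator{B}) = 0$. Then without loss of generality $\Delta_\varphi \operator{A} = 0$ (the argument is symmetric in $\operator{A}$ and $\operator{B}$). By Lemma \ref{lem:unczero}, this forces $\ket{\varphi}$ to be an eigenket of $\operator{A}$. But Theorem \ref{thm:noeigenket} states that no eigenket of $\operator{A}$ lies in $\cd$, contradicting $\ket{\varphi}\in\cd$. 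The same argument with the roles of $\operator{A}$ and $\operator{B}$ interchanged rules out $\Delta_\varphi \operator{B} = 0$. Hence both uncertainties are strictly positive on $\cd$ and their product is nonzero.

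There is no real obstacle here; the entire content of the proposition is already packaged inside the two lemmas cited. The only small care needed is to make sure the hypothesis of Theorem \ref{thm:noeigenket} is satisfied, namely that $(\operator{A},\operator{B})$ is Hermitian with the given canonical domain $\cd$. One should perhaps also remark that the conclusion is \emph{strictly} stronger than the usual Robertson–Schr\"odinger inequality $(\Delta_\varphi \operator{A})(\Delta_\varphi \operator{B}) \geq \tfrac{1}{2}|\braket{\varphi|[\operator{A},\operator{B}]|\varphi}|$, which in the essentially canonical case only gives $(\Delta_\varphi \operator{A})(\Delta_\varphi \operator{B}) \geq |c|/2$ on $\cd$ and already implies the nonvanishing; but the proof through Lemma \ref{lem:unczero} and Theorem \ref{thm:noeigenket} is more direct and does not require invoking the full Cauchy–Schwarz machinery.
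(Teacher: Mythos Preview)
Your proposal is correct and matches the paper's own proof essentially line for line: the paper invokes Theorem~\ref{thm:noeigenket} to note that eigenkets of $\operator{A}$ and $\operator{B}$ lie outside $\cd$, then applies Lemma~\ref{lem:unczero} (in the contrapositive direction) to conclude each uncertainty is nonzero on $\cd$. Your additional remarks on the Hermiticity hypothesis and on the relation to the Robertson--Schr\"odinger bound are accurate but go beyond what the paper records.
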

\begin{proof}
    From Theorem \ref{thm:noeigenket}, the eigenkets of $\operator{A}$ and $\operator{B}$ are not in $\cd$. Then, from Lemma \ref{lem:unczero}, $\Delta_\varphi \operator{A} \neq 0$ for $\ket{\varphi} \in \cd$. Similarly, $\Delta_\varphi \operator{B} \neq 0$ for $\ket{\varphi} \in \cd$.
\end{proof}

\noindent
We then see that for any essentially canonical pair of observables, their uncertainties (with respect to the states in the canonical domain) are never zero. The uncertainties can only be zero at their eigenkets, which a state can only pass through after it evolves and moves outside the canonical domain. In this scenario, the pair fails to be canonical and the uncertainties can be zero without any issue. Thus, for states in the canonical domain, this \textit{uncertainty relation} is always greater than or equal to a nonzero real constant. We have the following theorem \cite{hall2013quantum}, generalized for essentially canonical pairs.

\begin{theorem}
    Given a Hermitian essentially canonical pair $(\operator{A},\operator{B})$ in the canonical domain $\cd$, i.e., $[\operator{A},\operator{B}]\ket{\varphi} = ic \ket{\varphi}$ for $\ket{\varphi} \in \cd$ and nonzero real $c$, then the pair satisfies
    \begin{equation} \label{eq:essuncrel}
        (\Delta_\varphi \operator{A}) (\Delta_\varphi \operator{B}) \geq \frac{\abs{c}}{2} \, ,
    \end{equation}
    for all $\ket{\varphi} \in \cd$.
\end{theorem}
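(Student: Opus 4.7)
The plan is to adapt the standard Robertson--Heisenberg derivation, taking care that the essentially canonical commutation relation is only assumed to hold on $\cd$ rather than on all of $\hilbertspace$. Throughout, I take $\ket{\varphi}\in\cd$ to be normalized (as is required for it to be a state).

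First, I would introduce the shifted operators $\tilde{\operator{A}} = \operator{A} - \braket{\varphi|\operator{A}|\varphi}\operator{I}_\hilbertspace$ and $\tilde{\operator{B}} = \operator{B} - \braket{\varphi|\operator{B}|\varphi}\operator{I}_\hilbertspace$. Both are Hermitian, and since a multiple of the identity commutes with everything, $[\tilde{\operator{A}},\tilde{\operator{B}}] = [\operator{A},\operator{B}]$. Hence the essentially canonical relation is preserved in the form $[\tilde{\operator{A}},\tilde{\operator{B}}]\ket{\varphi}=ic\ket{\varphi}$ for $\ket{\varphi}\in\cd$. By the definition \eqref{eq:unc}, the uncertainties can be rewritten as norms: $\Delta_\varphi \operator{A} = \|\tilde{\operator{A}}\ket{\varphi}\|$ and $\Delta_\varphi \operator{B} = \|\tilde{\operator{B}}\ket{\varphi}\|$.

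Second, I would apply the Cauchy--Schwarz inequality to the vectors $\tilde{\operator{A}}\ket{\varphi}$ and $\tilde{\operator{B}}\ket{\varphi}$, giving
\begin{equation}
    \abs{\braket{\tilde{\operator{A}}\varphi|\tilde{\operator{B}}\varphi}} \leq \|\tilde{\operator{A}}\ket{\varphi}\| \, \|\tilde{\operator{B}}\ket{\varphi}\| = (\Delta_\varphi \operator{A})(\Delta_\varphi \operator{B}).
\end{equation}
Next, I would bound the left-hand side from below by its imaginary part. Using Hermiticity of $\tilde{\operator{A}}$ and $\tilde{\operator{B}}$,
\begin{equation}
    \mathrm{Im}\braket{\tilde{\operator{A}}\varphi|\tilde{\operator{B}}\varphi} = \frac{1}{2i}\qty(\braket{\varphi|\tilde{\operator{A}}\tilde{\operator{B}}|\varphi} - \braket{\varphi|\tilde{\operator{B}}\tilde{\operator{A}}|\varphi}) = \frac{1}{2i}\braket{\varphi|[\tilde{\operator{A}},\tilde{\operator{B}}]|\varphi}.
\end{equation}
Applying the essentially canonical relation (valid because $\ket{\varphi}\in\cd$) together with $\braket{\varphi|\varphi}=1$, this equals $c/2$. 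Since the absolute value of any complex number is at least its imaginary part in modulus, combining with Cauchy--Schwarz yields $\abs{c}/2 \leq (\Delta_\varphi\operator{A})(\Delta_\varphi\operator{B})$, which is \eqref{eq:essuncrel}.

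The proof is essentially textbook Robertson--Heisenberg once the shifted operators are introduced, so I do not expect any genuine obstacle. The only point requiring care is confirming that the essentially canonical relation only ever needs to be evaluated on the single vector $\ket{\varphi}\in\cd$ (which it is, since we only compute $\braket{\varphi|[\tilde{\operator{A}},\tilde{\operator{B}}]|\varphi}$), so that the restriction of the commutation relation to the proper subspace $\cd$ does not weaken the conclusion.
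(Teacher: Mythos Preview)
Your proposal is correct and follows essentially the same Robertson--Heisenberg argument as the paper's own proof: introduce the shifted Hermitian operators, apply Cauchy--Schwarz to $\tilde{\operator{A}}\ket{\varphi}$ and $\tilde{\operator{B}}\ket{\varphi}$, bound the modulus by the imaginary part, and evaluate the commutator expectation using the essentially canonical relation on $\ket{\varphi}\in\cd$. The only cosmetic difference is that the paper works with squared quantities throughout and takes the square root at the end, whereas you work directly with the norms; your explicit remark that the commutation relation is only invoked at the single vector $\ket{\varphi}$ is a useful clarification that the paper leaves implicit.
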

\begin{proof}
    Consider the operator $\operator{A}' = \operator{A} - \braket{\varphi|\operator{A}|\varphi}\operator{I}_\hilbertspace$ and $\operator{B}' = \operator{B} - \braket{\varphi|\operator{B}|\varphi}\operator{I}_\hilbertspace$, where $\operator{I}_\hilbertspace$ is the identity in $\hilbertspace$. Since $\operator{A}$ and $\operator{B}$ are Hermitian, so are $\operator{A}'$ and $\operator{B}'$, and thus, $\braket{\varphi|(\operator{A}')^2|\varphi} = \braket{\operator{A}'\varphi|\operator{A}'\varphi}$ and $\braket{\varphi|(\operator{B}')^2|\varphi} = \braket{\operator{B}'\varphi|\operator{B}'\varphi}$. The Cauchy-Schwarz inequality gives
    \begin{align}
        \braket{\varphi|(\operator{A}')^2|\varphi} \braket{\varphi|(\operator{B}')^2|\varphi} &\geq \abs{\braket{\operator{A}'\varphi|\operator{B}'\varphi}}^2 \label{eq:cs1} \\
        &\geq \abs{\Im \braket{\operator{A}'\varphi|\operator{B}'\varphi}}^2 \label{eq:cs2} \\
        &= \frac{1}{4} \abs{\braket{\operator{A}'\varphi|\operator{B}'\varphi} - \braket{\operator{B}'\varphi|\operator{A}'\varphi}}^2 \\
        &= \frac{1}{4} \abs{\braket{\varphi|[\operator{A}',\operator{B}']\varphi}}^2 \, .
    \end{align}
    Note that $[\operator{A}',\operator{B}']\ket{\varphi} = [\operator{A},\operator{B}]\ket{\varphi} = ic\ket{\varphi}$ for $\ket{\varphi} \in \cd$. Thus,
    \begin{equation}
        \braket{\varphi|(\operator{A}')^2|\varphi} \braket{\varphi|(\operator{B}')^2|\varphi} \geq \frac{\abs{c}^2}{4} \, .
    \end{equation}
    Since $\Delta_\varphi \operator{A} = \sqrt{\braket{\varphi|(\operator{A} - \braket{\varphi|\operator{A}|\varphi}\operator{I}_\hilbertspace)^2|\varphi}} = \sqrt{\braket{\varphi|(\operator{A}')^2|\varphi}}$ and $\Delta_\varphi \operator{B} = \sqrt{\braket{\varphi|(\operator{B} - \braket{\varphi|\operator{B}|\varphi}\operator{I}_\hilbertspace)^2|\varphi}} = \sqrt{\braket{\varphi|(\operator{B}')^2|\varphi}}$, we therefore have
    \begin{equation}
        (\Delta_\varphi \operator{A})(\Delta_\varphi \operator{B}) \geq \frac{\abs{c}}{2} \, .
    \end{equation}
\end{proof}

\noindent 
The inequality \eqref{eq:essuncrel} is called the uncertainty relation of $\operator{A}$ and $\operator{B}$. We will use the following proposition to identify the existence of minimum uncertainty states \cite{hall2013quantum}.
\begin{proposition} \label{prop:minunc}
    The uncertainty relation \eqref{eq:essuncrel} becomes an equality if and only if $\ket{\varphi}$ is an eigenket of $\operator{A} - i\gamma\operator{B}$ for some nonzero real $\gamma$.
\end{proposition}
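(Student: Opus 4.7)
The plan is to trace through the proof of the uncertainty relation and identify exactly when each of the two inequalities used there becomes an equality. The inequality $(\Delta_\varphi \operator{A})(\Delta_\varphi \operator{B}) \geq |c|/2$ was obtained by chaining two bounds: the Cauchy--Schwarz inequality \eqref{eq:cs1}, and the inequality $|z|^2 \geq |\Im z|^2$ applied to $z = \braket{\operator{A}'\varphi|\operator{B}'\varphi}$ in \eqref{eq:cs2}, where $\operator{A}' = \operator{A} - \braket{\varphi|\operator{A}|\varphi}\operator{I}_\hilbertspace$ and $\operator{B}' = \operator{B} - \braket{\varphi|\operator{B}|\varphi}\operator{I}_\hilbertspace$. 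The proposition is then the claim that both tightness conditions translate together into an eigenvalue equation for $\operator{A} - i\gamma\operator{B}$.

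For the forward direction, I would argue as follows. Equality in Cauchy--Schwarz requires $\operator{A}'\ket{\varphi}$ and $\operator{B}'\ket{\varphi}$ to be linearly dependent, which gives $\operator{A}'\ket{\varphi} = \lambda \operator{B}'\ket{\varphi}$ for some $\lambda \in \mathbb{C}$ (here I use that $\operator{B}'\ket{\varphi} \neq 0$, since by Theorem \ref{thm:noeigenket} and Lemma \ref{lem:unczero} we have $\Delta_\varphi \operator{B} \neq 0$ for $\ket{\varphi}\in\cd$). Equality in $|z|^2 \geq |\Im z|^2$ requires $\Re\braket{\operator{A}'\varphi|\operator{B}'\varphi}=0$. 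Substituting, $\braket{\operator{A}'\varphi|\operator{B}'\varphi} = \lambda^*\braket{\operator{B}'\varphi|\operator{B}'\varphi}$, and since $\braket{\operator{B}'\varphi|\operator{B}'\varphi}$ is a nonzero positive real number, the vanishing of the real part forces $\lambda^*$ to be purely imaginary, hence $\lambda = i\gamma$ for some $\gamma \in \reals$. Therefore $(\operator{A} - i\gamma\operator{B})\ket{\varphi} = \mu\ket{\varphi}$ with $\mu = \braket{\varphi|\operator{A}|\varphi} - i\gamma\braket{\varphi|\operator{B}|\varphi}$, i.e., $\ket{\varphi}$ is an eigenket of $\operator{A} - i\gamma\operator{B}$. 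To see that $\gamma \neq 0$, note that $\gamma = 0$ would give $\operator{A}'\ket{\varphi} = 0$, hence $\Delta_\varphi \operator{A} = 0$, contradicting the fact (again from Theorem \ref{thm:noeigenket} and Lemma \ref{lem:unczero}) that $\Delta_\varphi\operator{A} \neq 0$ on $\cd$.

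For the converse, I would begin with $(\operator{A} - i\gamma\operator{B})\ket{\varphi} = \mu\ket{\varphi}$, take the inner product with $\bra{\varphi}$ to evaluate $\mu = \braket{\varphi|\operator{A}|\varphi} - i\gamma\braket{\varphi|\operator{B}|\varphi}$, and conclude that $\operator{A}'\ket{\varphi} = i\gamma\operator{B}'\ket{\varphi}$. This linear dependence makes Cauchy--Schwarz tight, while the computation
\begin{equation}
\braket{\operator{A}'\varphi|\operator{B}'\varphi} = \overline{i\gamma}\braket{\operator{B}'\varphi|\operator{B}'\varphi} = -i\gamma\,\braket{\operator{B}'\varphi|\operator{B}'\varphi}
\end{equation}
is purely imaginary since $\gamma$ is real, so the second inequality is also tight. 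Chaining these back through the proof of the uncertainty relation yields equality in \eqref{eq:essuncrel}.

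The main subtlety to watch out for is the realness and nonvanishing of $\gamma$: one must carefully justify that $\operator{B}'\ket{\varphi}\neq 0$ before dividing to identify $\lambda$, and then use Hermiticity of $\operator{B}'$ (which makes $\braket{\operator{B}'\varphi|\operator{B}'\varphi}$ a positive real) to convert the vanishing of a real part into pure imaginariness of a scalar. Both of these rely on the fact that $\ket{\varphi}\in\cd$ cannot be an eigenket of $\operator{A}$ or $\operator{B}$, which is exactly the content of Theorem \ref{thm:noeigenket} combined with Lemma \ref{lem:unczero}; otherwise the proposition could fail for trivial reasons outside the canonical domain.
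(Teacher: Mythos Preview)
Your proof is correct and follows essentially the same approach as the paper: both trace the equality case back to tightness in the Cauchy--Schwarz step \eqref{eq:cs1} and in the estimate \eqref{eq:cs2}, then use Theorem~\ref{thm:noeigenket} to rule out the degenerate cases $\operator{A}'\ket{\varphi}=0$ or $\operator{B}'\ket{\varphi}=0$ and force $\lambda=i\gamma$ with $\gamma$ real and nonzero. If anything, you are more explicit than the paper in separating the two directions and in spelling out the converse (the paper bundles both into a single ``if and only if'' sentence), but the underlying argument is the same.
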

\begin{proof} 
    $(\Delta_\varphi \operator{A})(\Delta_\varphi \operator{B}) = \abs{c}/2$ if and only if \eqref{eq:cs1} and \eqref{eq:cs2} in the proof above are equalities, i.e.,
    \begin{equation}
        \braket{\varphi|(\operator{A}')^2|\varphi} \braket{\varphi|(\operator{B}')^2|\varphi} = \abs{\braket{\operator{A}'\varphi|\operator{B}'\varphi}}^2 = \abs{\Im \braket{\operator{A}'\varphi|\operator{B}'\varphi}}^2 \, .
    \end{equation}
    Thus, the equalities hold if and only if $\operator{A}'\ket{\varphi} = 0$, or $\operator{B}'\ket{\varphi} = 0$, or $\operator{A}'\ket{\varphi}=i\gamma\operator{B}'\ket{\varphi}$ (nonzero real $\gamma$). Since $\ket{\varphi} \in \cd$ cannot be an eigenket of $\operator{A}$ or $\operator{B}$ (and thus, are also not eigenkets of $\operator{A}'$ and $\operator{B}'$), then we are only left with having $\ket{\varphi}$ be an eigenket of $\operator{A}' - i\gamma\operator{B}'$ with eigenvalue zero. From the definition $\operator{A}' = \operator{A} - \braket{\varphi|\operator{A}|\varphi}\operator{I}_\hilbertspace$ and $\operator{B}' = \operator{B} - \braket{\varphi|\operator{B}|\varphi}\operator{I}_\hilbertspace$, we then have $(\operator{A}-i\gamma\operator{B})\ket{\varphi} = (\braket{\varphi|\operator{A}|\varphi} - i\gamma \braket{\varphi|\operator{B}|\varphi})\ket{\varphi}$, i.e., $\ket{\varphi}$ is an eigenket of $\operator{A}-i\gamma\operator{B}$.
\end{proof}

As an example, consider the 2-dimensional canonical pair \eqref{eq:2dcp} with their canonical domain \eqref{eq:2dcd}. This pair satisfies the uncertainty relation \eqref{eq:essuncrel} where $c = \hbar$. We see that $\ket{\varphi}\in\cd$ is an eigenket of $\operator{A}-i\gamma\operator{B}$ if $\gamma = -2\hbar/B_{12}^2 - (a_2 - a_1)/iB_{12}$. From Proposition \ref{prop:minunc}, the uncertainty relation is saturated (i.e., it becomes an equality with $\hbar/2$) only when $\gamma$ is purely real and nonzero. For the two-dimensional case, this only happens when $a_1 = a_2$. That is, if the term in $\operator{A}$ which commutes with $\operator{B}$ is proportional to the identity, then the uncertainty relation is $(\Delta_\varphi \operator{A}) (\Delta_\varphi \operator{B}) = \hbar/2$. In \cite{farrales2025characteristic}, we similarly see that the uncertainty relation is saturated for the two-dimensional projection of the characteristic time operator. We then observe one possible difference between the inequivalent solutions in 2D: adjusting $a_1$ and $a_2$ (e.g., to some $a_1\neq a_2$) affects the uncertainty relation between the canonical pair. 

Consider next the 3D case wherein $\operator{B}$ is nondegenerate. For the first canonical pair example \eqref{eq:3dcp}, we see that no such real nonzero $\gamma$ exists such that any $\ket{\varphi}\in\cd$ is an eigenket of $\operator{A} - i\gamma\operator{B}$, and thus, the uncertainty relation cannot be saturated. For the second example \eqref{eq:3dcpproj}, i.e., after performing a two-dimensional projection of \eqref{eq:3dcp}, we see that for $a_1 = a_2$, we find a real nonzero $\gamma = 2\hbar/B_{12}^2$ such that $\ket{\varphi}$ in its canonical domain is an eigenket of $\operator{A}-i\gamma\operator{B}$. Consider next when $\operator{B}$ contains degeneracy, i.e., we have the canonical pair \eqref{eq:3dcpdegen} with canonical domain \eqref{eq:3dcddegen}. When $a_1 = a_2 = a_3$ and $b = 0$, there exists a real $\gamma = 2\hbar/B_{12}^2$ such that $\ket{\varphi}$ is an eigenket of $\operator{A}-i\gamma\operator{B}$. The uncertainty relation can then be saturated for the latter two cases.

In general, for nondegenerate $\operator{B}$ forming a canonical pair with \eqref{eq:ndcpa}, it would seem that no minimum uncertainty state exists when the dimension $N$ is greater than two, since
\begin{align}
    (\operator{A}-i\gamma\operator{B})\ket{\varphi} = \sum_{l=1}^N \sum_{k=l+1}^N c_{k,l} \Bigg( &\qty(-\frac{i\hbar}{B_k-B_l} + a_k - i\gamma B_k)\ket{k} \nonumber \\
    &- \qty(\frac{i\hbar}{B_k-B_l}+a_l-i\gamma B_l)\ket{l} \nonumber \\
    &+ i\hbar(B_k - B_l) \sum_{s\neq k,l}^N \frac{1}{(B_s-B_k)(B_s-B_l)}\ket{s} \Bigg) \, .
\end{align}
And so, $\ket{\varphi}$ can become an eigenket of $\operator{A}-i\gamma\operator{B}$ if $a_k = a_l$ and $B_k - B_l$ is constant for all $k > l$, but this is only possible when $N = 2$.

Consider next a degenerate $\operator{B}$ with only two distinct eigenvalues, $B_1$ and $B_2$. With $L=2$ in \eqref{eq:ndcpadegengen}, this forms a canonical pair with
\begin{align} \label{eq:ndcpadegen2}
    \operator{A} &= \frac{1}{\sqrt{M_1 M_2}} \sum_{r=1}^{M_1} \sum_{r'=1}^{M_2} \frac{i\hbar}{B_1 - B_2} \ket{1,r}\bra{2,r'} + \frac{1}{\sqrt{M_1 M_2}} \sum_{r=1}^{M_2} \sum_{r'=1}^{M_1} \frac{i\hbar}{B_2-B_1}\ket{2,r}\bra{1,r'} \nonumber \\
    &\quad + \sum_{r=1}^{M_1} a_{1,r} \op{1,r} + \sum_{r=1}^{M_2} a_{2,r} \op{2,r} + \sum_{r=1}^{M_1} \sum_{\substack{r' = 1 \\ r' \neq r}}^{M_1} b_{1,r,r'} \ket{1,r}\bra{1,r'} + \sum_{r=1}^{M_2} \sum_{\substack{r' = 1 \\ r' \neq r}}^{M_2} b_{2,r,r'} \ket{2,r}\bra{2,r'} \, ,
\end{align}
where we let $\beta_{s,r,s',r'} = 1/\sqrt{M_1 M_2}$ so that \eqref{eq:Askdegenchareq} is satisfied. From \eqref{eq:ndcddegengen}, this pair has the one-dimensional canonical domain
\begin{equation}
    \cd = \qty{\ket{\varphi} = -\frac{1}{\sqrt{2 M_1}} \sum_{r=1}^{M_1} \ket{1,r} + \frac{1}{\sqrt{2 M_2}} \sum_{r=1}^{M_2} \ket{2,r} } \, .
\end{equation}
Then, $\ket{\varphi}\in\cd$ is an eigenket of $\operator{A}-i\gamma\operator{B}$ if $a_{1,r}=a_{2,r}$ and $b_{1,r,r'}=b_{2,r,r'}=0$ for all $r,r'$, with $\gamma = 2\hbar/B_{12}^2$.

Therefore, in $N$-dimensional Hilbert space, we see that under certain conditions on the term that commutes with $\operator{B}$, we can have a canonical pair $(\operator{A},\operator{B})$ satisfy $\Delta_\varphi\operator{A} \Delta_\varphi\operator{B}= \hbar/2$ for $\ket{\varphi}\in\cd$ if any of the following is true: (a) the Hilbert space is two-dimensional; (b) $\operator{B}$ is degenerate and only has two distinct eigenvalues; (c) $\operator{A}$ is projected to a two-dimensional subspace; or (d) $B$ is degenerate and $\operator{A}$ is projected to the subspace with only two possible eigenvalues of $\operator{B}$. In other words, if $\operator{A}$ only depends on two eigenvalues of $\operator{B}$, then a minimum uncertainty is possible.

As a quick remark, note that \eqref{eq:ndcpadegen2} and $\operator{B} = \sum_{s=1}^L \sum_{r=1}^{M_s} B_s \ket{s,r}\bra{s,r}$ form an essentially canonical pair with canonical domain $\cd_\pm = \qty{\ket{\varphi} = -\mp (2 M_1)^{-1/2} \sum_{r=1}^{M_1} \ket{1,r} + (2 M_2)^{-1/2} \sum_{r=1}^{M_2} \ket{2,r} }$, i.e., $[\operator{A},\operator{B}]\ket{\varphi_\pm} = \pm i\hbar \ket{\varphi_\pm}$ for $\ket{\varphi_\pm} \in \cd_\pm$. We see that the space outside the canonical domains is $\hilbertspace \setminus (\cd_+ \cup \cd_-)=\qty{\ket{\varphi_\perp} = \sum_{k=1}^{M_1 - 1} \alpha_k( -\ket{1,1} + \ket{1,k+1} ) + \sum_{k=1}^{M_2 - 2} \beta_k ( -\ket{2,1} + \ket{2,k+1} ) }$. We then have $[\operator{A},\operator{B}]\ket{\varphi_\perp} = 0$, and thus, we once again have a commutator with only two nonzero nondegenerate eigenvalues, $\pm i\hbar$. There are then two ways to have zero eigenvalues in the commutator (and thus, decreasing the maximum possible dimension of the canonical domains): either through projection or through degeneracy.

Finally, note that equivalent solutions satisfy the same uncertainty relation, i.e., unitary transformations do not affect the uncertainty relation. If for a particular solution the uncertainty relation is an equality, then all its equivalent solutions satisfy the same property. Consider $\mathcal{C}_1(\operator{A}_1,\operator{B}_1,\cd_1)$ and $\mathcal{C}_2(\operator{A}_2,\operator{B}_2,\cd_2)$ being equivalent solutions under some unitary $\operator{U}$. Then, the commutators $\operator{C}_1=[\operator{A}_1,\operator{B}_1]$ and $\operator{C}_2=[\operator{A}_2,\operator{B}_2]$ are also unitarily equivalent, with $\cd_1$ and $\cd_2$ having the same dimensions. With $\operator{C}_1 \ket{\varphi_1} = i\hbar \ket{\varphi_1}$ for $\ket{\varphi_1}\in\cd_1$, then $\operator{C}_2 \ket{\varphi_2} = i\hbar \ket{\varphi_2}$ for $\ket{\varphi_2} = \operator{U}^\dagger\ket{\varphi_1} \in \cd_2$. Similarly, $\operator{A}_1 - i\gamma\operator{B}_1$ is unitarily equivalent to $\operator{A}_2 - i\gamma\operator{B}_2$. If $\ket{\varphi_1}\in\cd_1$ is an eigenket of $\operator{A}_1 - i\gamma\operator{B}_1$ for some real constant $\gamma$, then $\ket{\varphi_2}\in\cd_2$ is an eigenket of $\operator{A}_2-i\gamma\operator{B}_2$ with the same $\gamma$. Thus, the saturation of the uncertainty relation is preserved under unitary transformations, and is a property that is shared with the equivalent solutions. In contrast, inequivalent solutions affect whether the uncertainty relation is saturated or not.

\section{Time operators} \label{sec:time}
Due to historical reasons, we shall call any operator that forms a canonical pair with the Hamiltonian as a time operator. This definition is justified since, as we will show, the measurement of the time operator can help us access the parametric time. Aside from that, we shall also call operators forming an essentially canonical pair (with respect to $-i\hbar$) with the Hamiltonian as a time operator.
\begin{definition}
    Given the Hamiltonian $\operator{H}$, a Hermitian operator $\operator{T}$ is called a time operator if $(\operator{T},\operator{H})$ or $(\operator{H},\operator{T})$ is a canonical pair in its corresponding canonical domain. The time operator satisfies the time-energy canonical commutation relation
    \begin{equation}
        [\operator{T},\operator{H}]\ket{\varphi}=\pm i\hbar\ket{\varphi} ,
    \end{equation}
    for all $\ket{\varphi}$ in the canonical domain $\cd^\pm$.
\end{definition}

\noindent
We shall be interested in the time-evolution of the states in the canonical domain (Schr\"{o}dinger picture), or equivalently, of the operator $\operator{T}$ (Heisenberg picture). We consider the unitary time-evolution operator $\operator{U}(t) = \exp(-i\operator{H}t/\hbar)$ and obtain the times $t$ of its corresponding time (parameter) invariant set $\tis$.

Given $\operator{T}(t) = \exp(i\operator{H}t/\hbar)\operator{T}\exp(-i\operator{H}t/\hbar)$, then the solution $\mathcal{C}(\operator{T}(t),\operator{H},\cd^+)$ or $\mathcal{C}(\operator{H},\operator{T}(t),\cd^-)$ for $t \in \tis$ has a time invariant set $\mathscr{T}$ of total measure zero (Theorem \ref{thm:tis}). From Proposition \ref{prop:tis}, it is of the form
\begin{equation}
    \tis = \qty{t = \frac{2n\pi\hbar}{\gcd(\tilde{\delta}(\operator{H}))}, n\in\mathbb{Z}}
\end{equation}
wherein $\tilde{\delta}(\operator{H})$ is the set of energy eigenvalue differences $E_k - E_l$ ($k>l$) except those differences containing $E_s$ wherein $\ket{s}\braket{s|\varphi} = 0$ for all $\ket{\varphi}\in\cd$. The time invariant set can then be modified by projecting $\operator{T}$ to a smaller subspace. We see that $\tis$ depends on the nature of the energy eigenvalues. If there exists two elements of $\tilde{\delta}(\operator{H})$ that are incommensurate, then $\tis$ is trivial; otherwise it is countably infinite. For all cases, $\tis$ has total measure zero.

It was previously believed that if the canonical relation is satisfied at $t = 0$, then it is satisfied for all times \cite{pauli1933handbuch,pauli1958handbuch,pauli1980general}. Here we see that this is not always true---indeed, the time operator and the Hamiltonian do not stay canonical as they evolve through time. Since $\operator{H}$ is bounded, we can write $\exp(-i\operator{H}t/\hbar) = \sum_{k=0}^\infty (k!)^{-1} (-i\operator{H}t/\hbar)^k $.  Let $t(\tau) = \tau + \tilde{t}$ where $\tau$ is a time interval $0\leq\tau<\tilde{t}$ and $\tilde{t}$ are the times in the time invariant set $\tis$. For short times, we then have $[\operator{T},\exp(-i\operator{H}t(\tau)/\hbar)] \ket{\varphi} = (\tau + \order{\tau^2})\ket{\varphi}$ for $\ket{\varphi} \in \cd^\pm$, giving us
\begin{equation} \label{eq:ldtime}
    \operator{T}(t(\tau)) \ket{\varphi} = \qty( \operator{T} \pm \tau + \order{\tau^2} ) \ket{\varphi} .
\end{equation}
Thus, for small $\tau$, i.e., for times near $\tis$, $\operator{T}(\tau) = \operator{T} \pm \tau$, and we get the expected equation $\dv*{\operator{T}\ket{\varphi}}{\tau} = \pm\ket{\varphi}$, and the canonical relation $[\operator{T}(\tau),\operator{H}]\ket{\varphi} = \pm i\hbar\ket{\varphi}$ is satisfied. This result is similar with the characteristic time operator in infinite-dimensional Hilbert space \cite{farrales2025characteristic}.

Thus, Hermitian time operators in finite-dimensional Hilbert space are never covariant, and only satisfy the expected dynamics in the neighborhood of a set of measure zero. They do not satisfy the usual definition of the canonical commutation relation nor the Weyl relation. They actually satisfy a weaker relation defined in \cite{arai2005generalized,arai2008time}.

\begin{definition}
    The pair of Hermitian operators $(\operator{T},\operator{H})$ satisfy the generalized weak Weyl relation in the Hilbert space $\hilbertspace$ if 
    \begin{equation}
        \operator{T} e^{-i\operator{H}t/\hbar} \ket{\psi} = e^{-i\operator{H}t/\hbar}(\operator{T} + \operator{K}(t))\ket{\psi} ,
    \end{equation}
    for all $\ket{\psi}$ in the domain of $\operator{T}$ and for all $t \in \reals$. The Hermitian operator $\operator{K}(t)$ is called the commuting factor.
\end{definition}

\noindent
Similar to the characteristic time operator \cite{arai2008time}, the commuting factor for the time operator in the form of \eqref{eq:ndcpa} is given by
\begin{equation}
    \operator{K}(t)\ket{\psi} = \sum_{s=1}^N \sum_{\substack{s' = 1 \\ s' \neq s}}^N \frac{i\hbar\psi_{s'}}{E_s - E_{s'}} \qty(e^{i(E_s - E_{s'})t/\hbar} - 1) \ket{s} ,
\end{equation}
where $\ket{\psi} = \sum_{s=1}^N \psi_s \ket{s} \in D(\operator{T}) = \hilbertspace$. All Hermitian time operators in finite-dimensional Hilbert space are thus generalized (weak Weyl) operators. We see that if $\ket{\psi}$ is in the canonical domain, then $\operator{K}'(0)\ket{\psi} = \ket{\psi}$, in accordance with \eqref{eq:ldtime}. This property is true for all time operators in finite-dimensional Hilbert space for states in the canonical domain.

In \cite{garrison1970canonically}, a quantum clock was defined to be a quantum system such that there exists a self-adjoint operator $\operator{T}$ and a state $\ket{\varphi(\tau)}$ such that $\exp(-i\operator{H}t/\hbar)\ket{\varphi(\tau)}=\ket{\varphi(\tau+t)}$ and $\braket{\varphi(\tau)|\operator{T}\varphi(\tau)}=\tau$ such that $\Delta_\varphi \operator{T}$ is negligible. States $\ket{\varphi(\tau)}$ satisfying this property are called \textit{clock states}. Since $\operator{T}$ and $\operator{H}$ form a canonical pair, then from Section \ref{sec:uncertainty}, they satisfy the \textit{time-energy uncertainty relation}
\begin{equation} \label{eq:uncreltime}
    (\Delta_\varphi \operator{T})(\Delta_\varphi \operator{H}) \geq \frac{\hbar}{2} \, ,
\end{equation}
for $\ket{\varphi}\in\cd$. There are many possible interpretations of such a relation \cite{busch1990a,busch1990b}, wherein $\Delta \operator{T}$ is usually taken to be some time duration. Here, time is treated as an observable, and thus, this uncertainty relation is interpreted the same way as the position-momentum uncertainty relation, i.e., $\Delta_\varphi \operator{T}$ is the standard deviation of measurements of $\operator{T}$ on an ensemble of states $\ket{\varphi}$.

For the quantum clocks considered here, $\Delta_\varphi \operator{T}$ need not be negligible since one only needs to increase the ensemble of measurements to obtain $\braket{\varphi(\tau)|\operator{T}\varphi(\tau)}$. Nevertheless, it is always possible to have $\Delta_\varphi \operator{T}$ be arbitrarily small by increasing $\Delta_\varphi \operator{H}$.

\begin{definition}
    Given the solution $\mathcal{C}(\operator{T},\operator{H},\cd^+)$ or $\mathcal{C}(\operator{H},\operator{T},\cd^-)$, a state $\ket{\varphi(t)}=\operator{U}(t)\ket{\varphi}$ is called a clock state if $\ket{\varphi} \in \cd^\pm$ and $t$ is in the neighborhood of $\tis$.
\end{definition}

\noindent
From Proposition \ref{prop:everyN}, we then see that any element of the Hilbert space can be a clock state corresponding to some appropriate choice of canonical pairs $(\operator{T},\operator{H})$.

\begin{remark}
    Measurement of $\operator{T}$ with respect to the clock state $\ket{\varphi(t)}$ provides the expectation value
    \begin{equation} \label{eq:evtime}
        \braket{\varphi(t(\tau))|\operator{T}\varphi(t(\tau))} = t_0 \pm \tau \, ,
    \end{equation}
    where $t(\tau) = \tau + \tilde{t}$ for small $\tau$, for all $\tilde{t} \in \tis$, and where $t_0 = \braket{\varphi(t=0)|\operator{T}\varphi(t=0)}$ is some constant.
\end{remark}

\noindent
The constant $t_0$ can then be thought of as some initial time, or a shift in the origin of time.

Thus, any quantum system with a finite-dimensional Hilbert space can act as a quantum clock, as once its Hamiltonian is identified, one can always find a \textit{characteristic} time operator, acting as the \textit{hand} of the clock (or a clock pointer). Similar with the characteristic time operator in infinite-dimensional Hilbert space \cite{farrales2025characteristic}, measurement of this time operator allows one to obtain parametric time in the neighborhood of the time invariant set $\tis$.

Inequivalent solutions are interpreted as two different canonical pairs that are physically different, and thus satisfy different properties. For example, given two inequivalent time operators $\operator{T}$ and $\operator{T}'=\operator{T}+\operator{F}$, the initial time for each operator $t_0 = \braket{\varphi|\operator{T}\varphi}$ and $t_0' = \braket{\varphi|\operator{T}'\varphi} = t_0 + \braket{\varphi|\operator{F}\varphi}$ are not equal if $\braket{\varphi|\operator{F}\varphi}\neq 0$. They may correspond to two different clocks that have different starting times.

Meanwhile, the uncertainty relation \eqref{eq:uncreltime} is followed by any time operator, i.e., it is satisfied by all equivalent and inequivalent solutions to the time-energy canonical commutation relation. According to Section \ref{sec:uncertainty}, the uncertainty relation is saturated only for some specific conditions, one of which is that $\operator{T}$ should only depend on two distinct eigenvalues of $\operator{H}$.

Different essentially canonical pairs also give different types of time operators. For example, given the Hamiltonian $\operator{H}$, an operator $\operator{T}^+$ satisfying $[\operator{T}^+,\operator{H}]\ket{\varphi}=i\hbar\ket{\varphi}$ for $\ket{\varphi}\in\cd^+$ gives $\dv*{\operator{T}^+}{t}=+1$ for $t \in \tis$, while an operator $\operator{T}^-$ satisfying $[\operator{T}^-,\operator{H}]\ket{\varphi}=-i\hbar\ket{\varphi}$ for $\ket{\varphi}\in\cd^-$ gives $\dv*{\operator{T}^-}{t}=-1$ for $t \in \tis$. The former corresponds to the canonical pair $(\operator{T},\operator{H})$, while the latter corresponds to the essentially canonical $(\operator{T},\operator{H})$, or the canonical $(\operator{H},\operator{T})$. We see that different essentially canonical pairs also correspond to different physical systems. Their properties differ, the former being an operator that increases in step with parametric time, the latter being an operator that decreases in step with time. $\operator{T}^+$ are then referred to as \textit{passage-time-type} time operators while $\operator{T}^-$ are referred to as \textit{time-of-arrival-type} time operators \cite{caballar2009characterizing}. For times near the time invariant set, the time operator expectation value becomes linear in time, i.e., $\braket{\varphi|\operator{T}^\pm(t(\tau))\varphi}=t_0\pm\tau$. Thus, in the context of being a quantum clock, $\operator{T}^+$ can be thought of as a \textit{quantum stopwatch}, while $\operator{T}^-$ is a \textit{quantum timer}.

Let us look at the two-dimensional time operator as an example \cite{SPP-2025-PB-14}. We can rewrite the canonical pair \eqref{eq:2dcp} as
\begin{equation}
    \operator{T} = \cos(\alpha) \frac{\hbar}{E_2 - E_1} \sigma_x - \sin(\alpha) \frac{\hbar}{E_2 - E_1} \sigma_y + \frac{1}{2}(a_1 - a_2) \sigma_z + \frac{1}{2} (a_1 + a_2) \operator{I}_\hilbertspace \, ,
\end{equation}
\begin{equation}
    \operator{H} = \frac{1}{2}(E_1 - E_2) \sigma_z + \frac{1}{2}(E_1 + E_2) \operator{I}_\hilbertspace \, ,
\end{equation}
where $\sigma_k$ are the Pauli matrices, and $\operator{I}_\hilbertspace$ is the identity. Note that the first two terms of $\operator{T}$ do not commute with $\operator{H}$, while the last two terms of $\operator{T}$ commute with $\operator{H}$. The canonical domain is given by \eqref{eq:2dcd}, i.e., $\cd^+ = \qty{ 2^{-1/2} \smqty(1 \\ ie^{-i\alpha}) }$. Adjusting $\alpha$ gives equivalent time operators, while adjusting $a_1$ and $a_2$ gives equivalent or inequivalent time operators. Time-of-arrival-type time operators are obtained by either considering $-\operator{T}$ in the same $\cd^+$, or by considering the same $\operator{T}$ but in $\cd^-=\qty{2^{-1/2}\smqty(1 \\ -ie^{-i\alpha})}$. We then see that the same pair $(\operator{T},\operator{H})$ defines two types of time operators depending on whether the state is in $\cd^+$ or $\cd^-$: if $\ket{\varphi_+}\in\cd^+$, then $\operator{T}$ is passage-time-type, while if $\ket{\varphi_-}\in\cd^-$, then $\operator{T}$ is time-of-arrival-type.

Let $\ket{\varphi_\pm} \in \cd^\pm$, i.e., $\ket{\varphi_\pm} = 2^{-1/2}\smqty(1 \\ \pm ie^{-i\alpha})$. For $\operator{U}(t) = \exp(-i\operator{H}t/\hbar)$, then $\operator{U}(t)\ket{\varphi}$ is in $\cd$ for all $t$ in the time invariant set $\tis = \qty{t = 2n\pi\hbar/E_{12}, n \in \mathbb{Z}}$. Let $\operator{T}(t) = \operator{U}^\dagger(t)\operator{T}\operator{U}(t)$. We then obtain $\braket{\varphi_\pm|\operator{T}(t(\tau))\varphi_\pm} = (a_1+a_2)/2 \pm (\hbar/E_{12}) \sin\qty(E_{12}\tau/\hbar)$. For small $\tau$, we have
\begin{equation}
    \braket{\varphi_\pm|\operator{T}(t(\tau))\varphi_\pm} = t_0 \pm \tau \, ,
\end{equation}
\begin{equation}
    \Delta_{\varphi_\pm} \operator{T}(t(\tau)) \Delta_{\varphi_\pm} \operator{H}(t(\tau)) = \frac{\hbar}{2} \sqrt{1 + \frac{(a_1-a_2)^2 E_{12}^2}{4\hbar^2}} \, ,
\end{equation}
where $t_0 = (a_1+a_2)/2$ and $t(\tau) = \tau + 2n\pi\hbar/E_{12}$ for $n \in \mathbb{Z}$, where the $2n\pi\hbar/E_{12}$ term are elements of the time invariant set $\tis$ for all $n$. The time invariant set is the same for all equivalent and inequivalent two-dimensional time operators.

With $\omega = E_{12}/\hbar$, the times in $\tis$ are integer multiples of the period $2\pi/\omega$, i.e., the time it takes for $\ket{\varphi_\pm}$ to return to itself. Thus, for small $\tau$, i.e., for $\omega\tau\ll1$, then $\sin(\omega\tau)\approx\omega\tau$ and the higher order terms $\order{\tau^2}$ vanish in the expectation value and uncertainty relation. In this scenario, the expectation value of $\operator{T}(t(\tau))$ linearly depends on $\tau$ and the uncertainty relation is always greater than or equal to $\hbar/2$. Note how equivalent solutions maintain the same properties, having the same expectation value and uncertainty relation. Inequivalent solutions affect the initial time $t_0$ in the expectation value, as well as increasing the uncertainty relation as $a_1$ goes away from $a_2$. In accordance with Section \ref{sec:uncertainty}, the uncertainty relation is indeed saturated when $a_1 = a_2$. Finally, the essentially canonical solutions arising from using either $\cd^\pm$ affects whether the expectation value increases or decreases with respect to time, while the uncertainty is unaffected. We then see how the clock behaves either as a quantum stopwatch or a quantum timer depending on the clock state used. All these different solutions constitute the infinitely-many possible time operators in two-dimensional Hilbert space.

As another example, consider the three-dimensional Hilbert space, with the Hamiltonian having two distinct eigenvalues, wherein one of them is doubly-degenerate. Consider then the time operator and Hamiltonian as a canonical pair in the form of \eqref{eq:3dcpdegen} with the one-dimensional canonical domain \eqref{eq:3dcddegen}. Let $t = \tau + \tilde{t}$ where $\tilde{t}$ is in its time invariant set $\tis = \qty{t = 2n\pi\hbar/E_{12},n\in\mathbb{Z}}$. For $\ket{\varphi_\pm} = 2^{-1} \smqty( \mp 1 \\ \mp 1 \\ \sqrt{2}) \in \cd^\pm$ and for small $\tau$, the expectation value and uncertainty relation is 
\begin{equation}
    \braket{\varphi_\pm|\operator{T}(t(\tau))\varphi_\pm} = t_0 \pm \tau \, ,
\end{equation}
\begin{equation}
    \Delta_{\varphi_\pm}\operator{T}(t(\tau)) \Delta_{\varphi_\pm}\operator{H}(t(\tau)) = \frac{\hbar}{2}\sqrt{1 + \frac{E_{12}^2}{4\hbar^2}(a_1^2+a_2^2+2a_3^2+2|b|^2+2(a_1+a_2)\Re(b) - 4t_0^2)} \, ,
\end{equation}
where $t_0 =  4^{-1}(a_1 + a_2 + 2 a_3 + 2 \Re(b))$. Similar to the previous example, we see that for short times, the expectation value of the time operator linearly depends with parametric time, and when $a_1=a_2=a_3$ and $b=0$, the uncertainty relation is saturated.

In general, given a time operator in an $N$-dimensional Hilbert space, we know that at times near the time invariant set, its expectation value \eqref{eq:evtime} becomes linear in time, and its uncertainty relation \eqref{eq:uncreltime} is always greater than or equal to $\hbar/2$. Projecting it to a smaller subspace preserves the dynamics, though it may affect the uncertainty. If projected to a subspace wherein there are only two distinct energy eigenvalues, the uncertainty can be minimized under certain conditions.

\section{Conclusions} \label{sec:conclusions}
Finite-dimensional quantum mechanics is filled with (essentially) canonical pairs and their corresponding uncertainty relations. Even the noncommuting Pauli matrices in two-dimensional Hilbert space are essentially canonical pairs by our definition, satisfying the relation $[\sigma_j,\sigma_k]\ket{\varphi_{l,\pm}}= 2i\epsilon_{jkl}\ket{\varphi_{l,\pm}}$, where $\epsilon_{jkl}$ is the Levi-Civita symbol (wherein $(\sigma_j,\sigma_k)$ is an essentially canonical pair only if $\epsilon_{jkl} \neq 0$), and where $\ket{\varphi_{l,\pm}}$ is in the one-dimensional canonical domain $\cd^\pm_l = \qty{\ket{\varphi_{l,\pm}} = \ket{\sigma_l,\pm}}$, where $\sigma_l\ket{\sigma_l,\pm} = \pm\ket{\sigma_l,\pm}$. This provides a natural way of generating \textit{canonical} pairs in two-dimensional Hilbert space through some appropriate mapping. Measurement of time can then be carried out by a measurement of a compatible observable, for example, an operator that is directly proportional to one of the Pauli matrices. In a Larmor clock for example, one can measure the spin to access time, but the clock is only linear with time when it is near its period. The fact that the time parameter invariant set for a time operator is always a set of measure zero shows the limitations of measuring time in quantum mechanics.

Since any pair of noncommuting operators are essentially canonical pairs, and since every Hermitian essentially canonical pair can be mapped to a canonical pair, then this lends to the following statement: that \textit{any operator that does not commute with the Hamiltonian can be measured to help obtain parametric time}. Consider some Hermitian operator $\operator{T'}$ such that $[\operator{T}',\operator{H}]\ket{\varphi}=ic\ket{\varphi}$ for $\ket{\varphi}$ in its canonical domain and for real $c$. Time can then be obtained with $\tau = (\hbar/c)\braket{\operator{T}'}_\varphi$ in the neighborhood of the time invariant set. 

Therefore, as was noted earlier, we see that \textit{every finite-dimensional quantum system is a quantum clock}, since given the system Hamiltonian, there always exists a \textit{clock hand} observable, the measurement of which yields time, though only in some specific set of times. The violation of the covariance property, $\operator{T}(t) = \operator{T}(0) + t$ for all $t \in \mathbb{R}$, may be a consequence of the Hermiticity of the time operator. If one were to strictly abide with the belief that time operators have to be covariant, we will be forced to neglect the existence of a time observable in standard quantum mechanics for finite-dimensional quantum systems. The fact that it is still possible to access time using the Hermitian operators constructed here shows that covariance is not a necessary property of all time observables. These Hermitian operators are but one possible class of solutions to the canonical commutation relation. Another class of solutions wherein the time operator does obey the covariance property could be obtained, but sacrificing self-adjointness may be required.

The ideas here can be extended to the case when the Hilbert space is infinite-dimensional. For example, there has been much study on the quantization of the classical time of arrival for a particle in $\hilbertspace=L^2(I\subseteq\mathbb{R})$: the dynamics of its eigenfunctions was studied in \cite{galapon2018quantizations,pablico2024role}, and an application to the tunneling time problem was studied in \cite{galapon2012only,flores2024partial,flores2024instantaneous}. In quantization, a classical phase space function is mapped to its quantum image by promoting position and momentum to their respective quantum operator counterparts; but there exists infinitely many quantization rules, and a particular rule does not necessarily lead to an operator which satisfies the usual definition of the canonical commutation relation \cite{galapon2018quantizations,pablico2024role,gotay2000obstructions}. This has been determined as the chief weakness of quantization as a method of constructing time operators. Supraquantization can be used as an alternative method, with the canonical relation already being a requirement in the construction of time operators \cite{farrales2022conjugates,pablico2023quantum,pablico2024moyal,galapon2004shouldnt}. However, by being cognizant of the canonical domain, it may be that the quantized time operators (or in fact, any operator not commuting with the Hamiltonian) satisfy the time-energy canonical commutation relation in a proper subspace of the Hilbert space. One only needs to solve for the eigenvalues of the corresponding commutator and obtain the eigenspaces, thus determining the canonical domain. In this subspace, the operator is linear in time, and measurement yields this time parameter when near the time invariant set. In this view, it may be possible to accommodate the quantized time-of-arrival operators as forming a canonical pair with the Hamiltonian, wherein the canonical domain may depend on the quantization rule used, with different quantization rules providing different equivalent or inequivalent solutions.

\begin{appendices}

\section{Solutions to the three-dimensional canonical commutation relation} \label{app:3dsoln}
\subsection{Nondegenerate case} \label{app:3dnondegen}

The general matrix
\begin{equation}
    \operator{A} = \mqty(a_1 && \beta_{12}\frac{\hbar}{B_{12}}e^{i\alpha_{12}} && \beta_{13}\frac{\hbar}{B_{13}}e^{i\alpha_{13}} \\ \beta_{12}^*\frac{\hbar}{B_{12}}e^{-i\alpha_{12}}  && a_2 && \beta_{23}\frac{\hbar}{B_{23}}e^{i\alpha_{23}} \\ \beta_{13}^*\frac{\hbar}{B_{13}}e^{-i\alpha_{13}} && \beta_{23}^*\frac{\hbar}{B_{23}}e^{-i\alpha_{23}}  && a_3) \, ,
\end{equation}
(real $a_1$, $a_2$, $a_3$, $\alpha_{12}$, $\alpha_{13}$, and $\alpha_{23}$; complex $\beta_{12}$, $\beta_{13}$, and $\beta_{23}$) is an essentially canonical pair with the non-degenerate $\operator{B}$ given the following cases:
\begin{enumerate}
    \item $\beta_{12} \neq 0$, $\beta_{13} \neq 0$, $\beta_{23} \neq 0$, let $\abs{\beta}^2 = \abs{\beta_{12}}^2 + \abs{\beta_{13}}^2 + \abs{\beta_{23}}^2$; let $\phi_1$, $\phi_2$ and $\phi$ be complex, whose values are such that the elements of the canonical domain are normalized.
    \begin{enumerate}
        \item $\abs{\beta}^2 = 3$, $\Im\qty(\beta_{12}\beta_{13}^*\beta_{23}e^{i(\alpha_{12}-\alpha_{13}+\alpha_{23})}) = -1$ \begin{itemize}
            \item $[\operator{A},\operator{B}]\ket{\varphi} = i\hbar\ket{\varphi}$ with $\cd = \qty{\ket{\varphi} = \phi_1 \smqty(1 \\ 0 \\ i\beta_{13}^* \exp(-i\alpha_{13})) + \phi_2 \smqty(0 \\ 1 \\ i\beta_{23}^*\exp(-i\alpha_{23}))}$
            \item $[\operator{A},\operator{B}]\ket{\varphi} = -2i\hbar\ket{\varphi}$ with $\cd = \qty{\ket{\varphi} = \phi \smqty(1 \\ \frac{\abs{\beta_{13}}^2-4}{-2i\beta_{12}\exp(i\alpha_{12})-\beta_{13}\beta_{23}^*\exp(-i(-\alpha_{13}+\alpha_{23}))} \\ \frac{\abs{\beta_{12}}^2 - 4}{-2i\beta_{13}\exp(i\alpha_{13})+\beta_{12}\beta_{23}\exp(i(\alpha_{12}+\alpha_{23}))})}$
        \end{itemize}
        \item $\abs{\beta}^2 \neq 3$, $\abs{\beta}^2>\frac{3}{4}$, $\Im\qty(\beta_{12}\beta_{13}^*\beta_{23}e^{i(\alpha_{12}-\alpha_{13}+\alpha_{23})}) = -(\abs{\beta}^2-1)/2$
        \begin{itemize}
            \item $[\operator{A},\operator{B}]\ket{\varphi} =i\hbar\ket{\varphi}$ with $\cd = \qty{\ket{\varphi} = \phi \smqty(1 \\ \frac{\abs{\beta_{13}}^2-1}{i\beta_{12}\exp(i\alpha_{12})-\beta_{13}\beta_{23}^*\exp(-i(-\alpha_{13}+\alpha_{23}))} \\ \frac{\abs{\beta_{12}}^2 - 1}{i\beta_{13}\exp(i\alpha_{13})+\beta_{12}\beta_{23}\exp(i(\alpha_{12}+\alpha_{23}))})}$
            \item $[\operator{A},\operator{B}]\ket{\varphi} =-\frac{1}{2}(1+\sqrt{4\abs{\beta}^2-3})i\hbar\ket{\varphi}$ with 
            \begin{equation}
                \cd = \qty{\ket{\varphi} = \phi \smqty(1 \\ \frac{\abs{\beta_{13}}^2- 4^{-1}(1+\sqrt{4\beta-3})^2 }{-2^{-1}(1+\sqrt{4\beta-3})i\beta_{12}\exp(i\alpha_{12})-\beta_{13}\beta_{23}^*\exp(-i(-\alpha_{13}+\alpha_{23}))} \\ \frac{\abs{\beta_{12}}^2 - 4^{-1}(1+\sqrt{4\beta-3})^2}{-2^{-1}(1+\sqrt{4\beta-3})i\beta_{13}\exp(i\alpha_{13})+\beta_{12}\beta_{23}\exp(i(\alpha_{12}+\alpha_{23}))})} \nonumber
            \end{equation}
            \item $[\operator{A},\operator{B}]\ket{\varphi} =-\frac{1}{2}(1-\sqrt{4\abs{\beta}^2-3})i\hbar\ket{\varphi}$ with
            \begin{equation}
                \cd = \qty{\ket{\varphi} = \phi \smqty(1 \\ \frac{\abs{\beta_{13}}^2- 4^{-1}(1-\sqrt{4\beta-3})^2 }{-2^{-1}(1-\sqrt{4\beta-3})i\beta_{12}\exp(i\alpha_{12})-\beta_{13}\beta_{23}^*\exp(-i(-\alpha_{13}+\alpha_{23}))} \\ \frac{\abs{\beta_{12}}^2 - 4^{-1}(1-\sqrt{4\beta-3})^2}{-2^{-1}(1-\sqrt{4\beta-3})i\beta_{13}\exp(i\alpha_{13})+\beta_{12}\beta_{23}\exp(i(\alpha_{12}+\alpha_{23}))})} \nonumber
            \end{equation}
        \end{itemize}
    \end{enumerate}
    \item \begin{enumerate}
        \item $\beta_{12} = 0$, $\abs{\beta_{13}}^2 + \abs{\beta_{23}}^2 = 1$, and $\alpha_{13},\alpha_{23}\in\mathbb{R}$
        \begin{itemize}
            \item $[\operator{A},\operator{B}]\ket{\varphi} = i\hbar\ket{\varphi}$ with $\cd = \qty{\ket{\varphi} = 2^{-1/2} \smqty(i\beta_{13} \exp(i\alpha_{13}) \\ i\beta_{23}\exp(i\alpha_{23}) \\ 1)}$
            \item $[\operator{A},\operator{B}]\ket{\varphi} = -i\hbar\ket{\varphi}$ with $\cd = \qty{\ket{\varphi} = 2^{-1/2} \smqty(-i\beta_{13} \exp(i\alpha_{13}) \\ -i\beta_{23}\exp(i\alpha_{23}) \\ 1)}$
            \item $[\operator{A},\operator{B}]\ket{\varphi} = 0$ with $\cd = \qty{\ket{\varphi} = \smqty( \beta_{23}^*\exp(-i\alpha_{23}) \\ -\beta_{13}^*\exp(-i\alpha_{13}) \\ 0 )}$
        \end{itemize}
        \item $\beta_{13} = 0$, $\abs{\beta_{12}}^2 + \abs{\beta_{23}}^2 = 1$, and $\alpha_{12},\alpha_{23}\in\mathbb{R}$ \begin{itemize}
            \item $[\operator{A},\operator{B}]\ket{\varphi} = i\hbar\ket{\varphi}$ with $\cd = \qty{\ket{\varphi} = 2^{-1/2} \smqty(i\beta_{12} \exp(i\alpha_{12}) \\ 1 \\ -i\beta_{23}^* \exp(-i\alpha_{23}))}$
            \item $[\operator{A},\operator{B}]\ket{\varphi} = -i\hbar\ket{\varphi}$ with $\cd = \qty{\ket{\varphi} = 2^{-1/2} \smqty(-i\beta_{12} \exp(-i\alpha_{12}) \\ 1 \\ i\beta_{23}^* \exp(i\alpha_{23}))}$
            \item $[\operator{A},\operator{B}]\ket{\varphi} = 0$ with $\cd = \qty{\ket{\varphi} = \smqty( \beta_{23}\exp(i\alpha_{23}) \\ 0 \\ -\beta_{12}^* \exp(-i\alpha_{12}) )}$
        \end{itemize}
        \item $\beta_{23} = 0$, $\abs{\beta_{12}}^2 + \abs{\beta_{13}}^2 = 1$, and $\alpha_{12},\alpha_{13}\in\mathbb{R}$ \begin{itemize}
            \item $[\operator{A},\operator{B}]\ket{\varphi} = i\hbar\ket{\varphi}$ with $\cd = \qty{\ket{\varphi} = 2^{-1/2} \smqty( 1 \\ -i\beta_{12}^* \exp(-i\alpha_{12}) \\ -i\beta_{13}^* \exp(-i\alpha_{13}) )}$
            \item $[\operator{A},\operator{B}]\ket{\varphi} = -i\hbar\ket{\varphi}$ with $\cd = \qty{\ket{\varphi} = 2^{-1/2} \smqty( 1 \\ i\beta_{12}^* \exp(-i\alpha_{12}) \\ i\beta_{13}^* \exp(-i\alpha_{13}) )}$
            \item $[\operator{A},\operator{B}]\ket{\varphi} = 0$ with $\cd = \qty{\ket{\varphi} = \smqty( 0 \\ \beta_{13} \exp(i\alpha_{13}) \\ -\beta_{12} \exp(i\alpha_{12}) )}$
        \end{itemize}
    \end{enumerate}
\end{enumerate}
Note that there are cases when the commutator is equal to zero even in some nontrivial subspace of the Hilbert space. By definition, they are not essentially canonical pairs, but they are still shown here for completeness. 

The solutions can be thought of as $\beta_{12}\operator{A}_{12} + \beta_{13}\operator{A}_{13} + \beta_{23} \operator{A}_{23}$ where $\operator{A}_{kl} = \operator{P}_{kl}\operator{A}\operator{P}_{kl}$, $\operator{P}_{kl} = \op{k} + \op{l}$, with $\operator{A}$ as given in \eqref{eq:3dcp}. Thus, from $\operator{A}$, one can obtain all possible solution by varying $\alpha_{kl}$ (giving equivalent solutions), $a_k$ (which may give equivalent or inequivalent solutions), or by taking a sum of its projections (giving inequivalent solutions).

\subsection{Degenerate case} \label{app:3ddegen}
The general matrix
\begin{equation}
    \operator{A} = \mqty(a_1 && \beta_{12} && \beta_{13}\frac{\hbar}{B_{12}}e^{i\alpha_{13}} \\ \beta_{12}^* && a_2 && \beta_{23}\frac{\hbar}{B_{12}}e^{i\alpha_{23}} \\ \beta_{13}^*\frac{\hbar}{B_{12}}e^{-i\alpha_{13}} && \beta_{23}^*\frac{\hbar}{B_{12}}e^{-i\alpha_{23}}  && a_3) \, ,
\end{equation}
(real $a_1$, $a_2$, $a_3$, $\alpha_{13}$, and $\alpha_{23}$; complex $\beta_{12}$, $\beta_{13}$, and $\beta_{23}$)  is an essentially canonical pair with $\operator{B} = \smqty(B_1 && 0 && 0 \\ 0 && B_1 && 0 \\ 0 && 0 && B_2)$ with doubly-degenerate $B_1$ and nondegenerate $B_2$ when $\abs{\beta_{13}}^2 + \abs{\beta_{23}}^2 = 1$ with the following cases:
\begin{itemize}
    \item $[\operator{A},\operator{B}]\ket{\varphi} = i\hbar\ket{\varphi}$ with $\cd = \qty{\ket{\varphi} = 2^{-1/2} \smqty( -i\beta_{13} \exp(i\alpha_{13}) \\ -i\beta_{23} \exp(i\alpha_{23}) \\ 1 )}$
    \item $[\operator{A},\operator{B}]\ket{\varphi} = -i\hbar\ket{\varphi}$ with $\cd = \qty{\ket{\varphi} = 2^{-1/2} \smqty( i\beta_{13} \exp(i\alpha_{13}) \\ i\beta_{23} \exp(i\alpha_{23}) \\ 1 )}$
    \item $[\operator{A},\operator{B}]\ket{\varphi} = 0$ with $\cd = \qty{\ket{\varphi} = \smqty( \beta_{23}^* \exp(i\alpha_{23}) \\ - \beta_{13}^* \exp(-i\alpha_{13}) \\ 0 )}$
\end{itemize}

\end{appendices}


\end{document}